\documentclass[journal]{IEEEtran}
\usepackage{amsmath,amssymb,bm}
\usepackage{graphicx}
\usepackage[colorlinks,citecolor=blue,linkcolor=blue]{hyperref}
\usepackage{algorithm}
\usepackage{algorithmic}

\newtheorem{remark}{Remark}
\newtheorem{proposition}{Proposition}
\newtheorem{corollary}{Corollary}
\usepackage{threeparttable}
\usepackage{multirow}

\newcommand{\E}{\mathbb{E}}
\newcommand{\C}{\mathbb{C}}
\newcommand{\R}{\mathbb{R}}

\newcommand{\tr}{\operatorname{tr}}

\begin{document}

\title{Taming the Bessel Landscape:\\Joint Antenna Position Optimization for Spatial Decorrelation in Fluid MIMO Systems}

\author{Tuo Wu, 
            Kai-Kit Wong,~\IEEEmembership{Fellow,~IEEE}, 
            Baiyang Liu, \\
            Kin-Fai Tong,~\IEEEmembership{Fellow,~IEEE}, and 
            Hyundong Shin,~\IEEEmembership{Fellow,~IEEE}
\vspace{-8mm}

\thanks{(\textit{Corresponding author: Kai-Kit Wong.})}

\thanks{This research work of T. Wu was funded by Hong Kong Research Grants Council under the Area of Excellence Scheme under Grant AoE/E-101/23-N. The work of K. K. Wong is supported by the Engineering and Physical Sciences Research Council (EPSRC) under grant EP/W026813/1. The research work of K.-F. Tong and Baiyang Liu was partly funded by Hong Kong Research Grants Council under the Area of Excellence Scheme under Grant AoE/E-101/23-N and the Hong Kong Metropolitan University, Staff Research Startup Fund: FRSF/2024/03. The work of H. Shin is supported by the National Research Foundation of Korea (NRF) grant funded by the Korean government (MSIT) (RS-2025-00556064 and RS-2025-25442355), and by the Ministry of Science and ICT (MSIT), Korea, under the ITRC (Information Technology Research Center) support program (IITP-2025-RS-2021-II212046), supervised by the IITP (Institute for Information \& Communications Technology Planning \& Evaluation).}

\thanks{T. Wu is with the Department of Electrical Engineering, City University of Hong Kong, Hong Kong, China (E-mail: $\rm tuowu2@cityu.edu.hk$).} 
\thanks{K. K. Wong is with the Department of Electronic and Electrical Engineering, University College London, WC1E 7JE, London, United Kingdom, and also with the Department of Electronic Engineering, Kyung Hee University, Yongin-si, Gyeonggi-do 17104, Republic of Korea (e-mail: $\rm kai\text{-}kit.wong@ucl.ac.uk$).}
\thanks{B. Liu and K.-F. Tong are with the School of Science and Technology, Hong Kong Metropolitan University, Hong Kong SAR, China (E-mail: $\rm \{byliu, ktong\}@hkmu.edu.hk$).} 
\thanks{H. Shin is with the Department of Electronics and Information Convergence Engineering, Kyung Hee University, Yongin-si, Gyeonggi-do 17104, Republic of Korea (e-mail: $\rm hshin@khu.ac.kr$).}
}

\markboth{}
{Wu \MakeLowercase{\textit{et al.}}: Taming the Bessel Landscape: Joint Position Optimization for FAS-MIMO Decorrelation}

\maketitle

\begin{abstract}
When the concept of fluid antenna system (FAS) is applied to multiple-input multiple-output (MIMO) systems, this gives rise to MIMO-FAS, a.k.a.~fluid MIMO. Under rich scattering, the spatial correlation matrices are governed by the zeroth-order Bessel function $J_0(\cdot)$ through the continuously adjustable antenna positions, creating a highly non-convex landscape for optimization with fluctuating local optima---the \emph{Bessel landscape}. In this paper, we tackle the joint transmitter (TX) and receiver (RX) antenna position optimization problem in fluid MIMO to maximize the ergodic capacity by shaping this landscape. Using Kronecker channel decomposition, we firstly develop a suite of analytical results that expose the problem's intrinsic structure: (i) a high signal-to-noise ratio (SNR) capacity approximation that decomposes the objective into separable log-determinant terms of the TX and RX correlation matrices, $\mathbf{R}_T$ and $\mathbf{R}_R$, respectively, (ii) a closed-form capacity loss bound linking $\det(\mathbf{R}_T)\det(\mathbf{R}_R)$ to the performance gap relative to the independent and identically distributed (i.i.d.) ideal MIMO channel, and (iii) the globally optimal inter-element spacing when the number of fluid elements at the TX is $N=2$ at the first zero of $J_0$. Guided by these insights, we propose two algorithms within an alternating optimization (AO) framework. The first algorithm is AO with particle swarm optimization (PSO) which deploys a particle swarm to explore the Bessel landscape globally without gradient information. Then in the second method, we use successive convex approximation (SCA) to obtain the gradient in closed form via $J_1(\cdot)$ to construct convex surrogates for orders-of-magnitude faster convergence. Both algorithms, AO-PSO and AO-SCA, are provably monotone convergent. Simulation results show that the proposed schemes closely approach the i.i.d.~capacity upper bound, outperforming fixed-position antenna (FPA) deployments by over $7$~bps/Hz at high SNR, while AO-SCA achieves identical performance to AO-PSO with a speedup factor exceeding $10^5$.
\end{abstract}

\begin{IEEEkeywords}
Fluid antenna system (FAS), spatial correlation, MIMO, MIMO-FAS, ergodic capacity, particle swarm optimization (PSO), alternating optimization (AO).
\end{IEEEkeywords}

\vspace{-2mm}
\section{Introduction}\label{sec:intro}
\IEEEPARstart{F}{rom the first} patent by Paulraj and Kailath in 1994 \cite{Paulraj-1994} to the groundbreaking paper by Foschini in 1996 \cite{Foschini-1996}, multiple-input multiple-output (MIMO) antenna systems have since become a key enabler from third generation (3G) to now the fifth generation (5G). MIMO promises to greatly enhance communication reliability \cite{Telatar-1999,Tarokh-1998,Tarokh-1999} and spectral efficiency \cite{Raleigh-1998} through spatial diversity. Furthermore, MIMO has evolved into multiuser MIMO \cite{wong2000opt,wong2002per,wong2003jcd,Vishwanath-2003,Choi-2004,Spencer-2004}, which is the backbone of the physical layer in 5G \cite{Villalonga2022spectral}. Interest in MIMO continues unabated, and many have argued that antenna counts will keep increasing to meet growing performance demands \cite{10379539}. However, indiscriminately scaling the array size may not be the right direction in practice, as it comes with substantial hardware cost, form-factor constraints, and channel acquisition overhead. An increasing concern is the explosive increase in power consumption when the number of antennas becomes very large and the peak-to-average power ratio (PAPR) greatly compromises the efficiency of power amplifiers \cite{Hung-2014}.

Looking ahead to sixth generation (6G) \cite{Tariq-2020}, techniques that can improve MIMO without necessarily raising the number of antennas will be essential. To this end, {\em reconfigurable antennas} show great potential but their capabilities have yet to translate into system-level performance gains \cite{Bernhard-2007}. One reason is that how the available control knobs (states, patterns, tuning ranges, switching granularity) translate into repeatable and measurable gains in throughput, reliability, or coverage under realistic constraints, is not well understood. Development driven from the antenna community has often been disconnected from what mobile systems actually need, yielding impressive prototypes whose reconfigurability is difficult to exploit end-to-end or whose benefits are fragile outside idealized settings. 

To bridge this gap, Wong {\em et al.}~introduced the fluid antenna system (FAS) concept \cite{wang2021fluid,wong2020perf}. FAS is a hardware-agnostic system concept that considers the antenna as a reconfigurable physical-layer resource to broaden system design and network optimization, emphasizing shape and position flexibility \cite{new2025tut,Lu-2025,hong2025contemporary,new2025flar,wu2024flu}. Practical FAS may take forms from a range of reconfigurable antenna technologies such as movable elements \cite{Zhu-Wong-2024}, liquid-based antennas \cite{shen2024design,Shamim-2025}, reconfigurable pixels \cite{zhang2024pixel,tong-2025pixel,Wong-wc2026}, metametarial-based antennas \cite{Zhang-jsac2026,Liu-2025arxiv}, and etc. In \cite{tong2025design}, Tong {\em et al.}~discussed various implementation technologies for FAS and provided a comparative analysis.

Position reconfigurability in FAS provides a whole new way to obtain diversity for additional degree of freedom (DoF) in the spatial domain, unlike conventional fixed-position antennas (FPAs). In recent years, FAS has attracted increasing research interest, leading to a rapidly expanding body of results. For instance, efforts have been made to accurately characterize the spatial correlation among the FAS channels \cite{FAS22,khammassi2023new} and to study the diversity performance \cite{new2024fluid,new2024an}. Copula techniques have been applied successfully to analyze the performance of FAS \cite{Ghadi-2023,10678877}. In \cite{ramirez2024new}, Ram\'{i}rez-Espinosa {\em et al.}~proposed the block-correlation model (BCM) that maintains tractability and great accuracy in performance analysis for FAS. Later, \cite{LaiX} employed the model with variable correlation parameters to investigate FAS. Finding the optimal port position in FAS is an important design problem. For example, \cite{G3_chai2022SISO-FAS-PS} tackled the port selection problem when only a few channel state information (CSI) observations are available. The combination of MIMO and FAS, a.k.a.~MIMO-FAS or fluid MIMO, leads to the joint port selection and beamforming problem which was studied in \cite{XLai23,10416896,LZhu23,10767351}. FAS has also been applied to improve performance for spectrum sensing \cite{YaoJ241}, integrated sensing and communication (ISAC) \cite{wang2024fluid,zou2024shift,Zhou-isac2024,SYangWCL25}, non-orthogonal multiple access (NOMA) \cite{new-2024noma,YaoJ252,TWuTCOMM26}, and reconfigurable intelligent surfaces (RISs) \cite{Ghadi2024005,LaiX242,YaoJ251,YaoJTWC26,TWuTVT25,HChenTNSE25}. In addition, localization performance can be greatly enhanced by scalable FAS \cite{HXuTWC25,TWuJSTSP25}. Besides, the additional DoF enabled by FAS is useful for secrecy communication \cite{Security3,Security2,ghadi2024phys,Yao-2025pls,Security5,TuoW} and activity monitoring \cite{JYao2024}.

Moreover, FAS provides a new approach to multiple access, revealing that multiuser communication on the same channel use can be achieved using position reconfigurability entirely at the receiver side without the need of precoding and interference management \cite{H4_wong2022FAMA}. Fast fluid antenna multiple access (FAMA) \cite{H5_wong2023fast}, slow FAMA \cite{H6_wong2023sFAMA,Coma-2026fama,Yuan-2026fama,Dinis-2026fama,Zhang-fama2025}, compact ultra-massive array (CUMA) \cite{H12_Wong2024cuma}, coded FAMA \cite{H10_hong2024coded,H11_hong2025Downlink}, turbo FAMA \cite{Waqar-2025,Waqar-2026wcl} have recently been proposed. CSI in FAS is essential to utilize the DoF and the CSI estimation problem has been addressed in \cite{xu2024channel,zhang2024learning,new2025channel,10807122,xu2024sparse}.

Overall, FAS research is a rapidly growing area. However, despite an expanding body of literature, most existing works focus on port-selection diversity with a single radio-frequency (RF) chain and rely on simplified spatial-correlation models. For fluid MIMO systems in which multiple fluid antennas are deployed at both the transmitter and receiver, the associated design and optimization problem remains largely open, which motivates our work. Table~\ref{tab:literature} summarizes the related literature and positions our contributions relative to existing efforts. 

Specifically, fluid MIMO provides a fundamental opportunity: under rich scattering, the spatial correlation matrices---parameterized by the antenna positions via the Jakes (Clarke) zeroth-order Bessel function $J_0(\cdot)$ \cite[Section~II]{new2025tut}, \cite{Jakes74}---become explicit design variables. Their entries vary oscillatory with position, forming a non-convex \emph{Bessel landscape} that can be shaped to maximize the ergodic capacity. Realizing this opportunity, however, entails several formidable challenges. First of all, the correlation kernel $J_0(2\pi d/\lambda)$ is oscillatory and non-convex in the inter-element spacing $d$, which induces a rugged ergodic-capacity surface with numerous local optima as the antenna positions vary continuously. Second, the transmit (TX) and receive (RX) positions are coupled through the Kronecker correlation model $\mathbf{R}=\mathbf{R}_R\otimes\mathbf{R}_T$, leading to a joint optimization over $N+M$ continuous variables with an objective that does not admit a trivial decomposition in which $N$ and $M$ denote, respectively, the number of fluid antennas at the TX and RX. Third, practical implementation constraints---finite apertures and minimum spacing requirements---render the feasible set inherently non-convex, thereby precluding the direct use of standard convex-optimization machinery. Moreover, existing correlation-characterization frameworks, such as the BCM \cite{ramirez2024new,LaiX242} and variable BCM (VBCM) \cite{LaiX,TuoW}, were developed primarily for port-selection analysis and do not address multi-antenna position optimization in fluid MIMO.

\begin{table*}[]
{
\caption{Comparison of Representative FAS Works}\label{tab:literature}
\vspace{-4mm}
\begin{center}
\resizebox{.7\linewidth}{!}{
\begin{tabular}{|l|c|c|c|c|c|}
\hline
\textbf{Reference} & \textbf{System} & \textbf{FA Elements} & \textbf{Design Variable} & \textbf{Correlation Model} & \textbf{Key Analytical Results} \\
\hline
\cite{wang2021fluid,wong2020perf} & SISO & Single & Port selection & Jakes (exact) & Capacity \& outage bounds \\
\cite{FAS22} & SISO & Single & Port selection & Jakes (exact) & Correlation parameters \\
\cite{ramirez2024new, LaiX242} & SISO/RIS & Single & Port selection & BCM & Outage probability \\
\cite{LaiX, TuoW} & SISO & Single & Port selection & VBCM & Outage \& secrecy \\
\cite{new2024an} & MIMO & Multiple & Port selection & Jakes (exact) & DMT, $q$-outage capacity \\
\cite{LZhu23} & MU-MISO & Multiple & Positions (inst.\ CSI) & --- & Sum-rate optimization \\
\cite{SYangWCL25} & ISAC & Multiple & Positions (inst.\ CSI) & --- & SINR optimization \\
\hline
\textbf{This work} & \textbf{MIMO} & \textbf{Multiple} & \textbf{Positions (stat.\ CSI)} & \textbf{Jakes (exact)} & \textbf{Capacity approx., loss bound, optimal spacing} \\
\hline
\end{tabular}}
\end{center}}
\vspace{-6mm}
\end{table*}

To overcome these difficulties, we first establish a suite of analytical results that unveils the problem's intrinsic structure. In particular, a high signal-to-noise ratio (SNR) approximation of the ergodic capacity (Proposition~\ref{prop:highSNR}) shows that the objective admits an additive decomposition into log-determinant terms of the TX and RX correlation matrices, $\log_2\det(\mathbf{R}_T)$ and $\log_2\det(\mathbf{R}_R)$, thereby demonstrating a separable structure that is obscured in the original coupled formulation. Next, a closed-form capacity-loss bound (Corollary~\ref{cor:loss}) characterizes, in bps/Hz, the exact penalty incurred by spatial correlation relative to the independent and identically distributed (i.i.d.) MIMO benchmark, explicitly linking the performance gap to $\det(\mathbf{R}_T)\det(\mathbf{R}_R)$. For the special case of $N=2$, we further derive the globally optimal inter-element spacing in closed form as the first positive zero of $J_0$ (Proposition~\ref{prop:N2}), yielding a fundamental design rule: placing two antennas at approximately $0.383\lambda$ achieves full decorrelation. Collectively, these insights provide engineering intuition and directly motivate the algorithmic decomposition strategy developed next.

Building upon the separability, an alternating-optimization (AO) framework that decomposes the TX-RX position design into a sequence of single-sided subproblems is developed: at each step, one side's antenna locations are optimized while the other side is held fixed. For each subproblem, we propose two complementary solvers. The first employs particle swarm optimization (PSO), which maintains a diverse population of candidate position vectors to explore the oscillatory Bessel landscape in parallel without gradient information, thereby offering strong robustness to local optima. The second is based on successive convex approximation (SCA): by exploiting a closed-form gradient of $\log_2\det(\mathbf{R})$ expressed through the first-order Bessel function $J_1(\cdot)$, we construct a concave quadratic surrogate at each iteration and reduce the subproblem to a sequence of efficiently solvable convex programs. Both AO-PSO and AO-SCA incorporate a projection-based feasibility mechanism to enforce finite-aperture and minimum-spacing constraints, and we establish monotone convergence of the resulting objective sequence. Numerically, AO-SCA matches the capacity attained by AO-PSO while achieving orders-of-magnitude runtime savings, yielding an attractive complexity-performance tradeoff for practical deployment.


Our main contributions are summarized as follows:
\begin{itemize}
\item We formulate the ergodic capacity maximization problem for fluid MIMO with position-dependent Kronecker-structured spatial correlation under Jakes' model, subject to aperture and minimum spacing constraints.
\item We derive a high-SNR ergodic capacity approximation that decomposes the capacity into a log-determinant function of the TX and RX correlation matrices (Proposition~\ref{prop:highSNR}), establish a closed-form capacity loss bound (Corollary~\ref{cor:loss}), and obtain the optimal antenna spacing in closed form for the $N\!=\!2$ case (Proposition~\ref{prop:N2}).
\item We propose AO-PSO to solve the non-convex problem, and prove its monotone convergence. The algorithm alternately optimizes the TX and RX antenna positions using PSO with a feasibility projection mechanism.
\item We develop AO-SCA that transforms the position optimization into a sequence of convex subproblems. By deriving the gradient of $\log_2\det(\mathbf{R}_T)$ in terms of the first-order Bessel function $J_1(\cdot)$, we obtain a projected gradient ascent algorithm that is orders of magnitude faster than AO-PSO while achieving the same capacity.
\item Our Monte-Carlo simulations validate the analytical results, showing that both AO-PSO and AO-SCA significantly outperform FPA, single-sided FAS optimization, and random placement schemes, closely approaching the i.i.d.\ MIMO capacity upper bound.
\end{itemize}


\vspace{-2mm}
\section{System Model}\label{sec:system}
\subsection{Antenna Configuration}
We consider a point-to-point narrowband fluid MIMO system where both the TX and RX are equipped with FAS. The TX deploys $N$ fluid antenna elements in a linear region of length $A\lambda$, where $\lambda$ is the carrier wavelength. The antenna positions of the TX along the array axis are collected in the vector $\mathbf{t} = [t_1, t_2, \ldots, t_N]^T \in \R^N$, with $t_i \in [0, A\lambda]$ for all $i$. Similarly, the RX deploys $M$ fluid antenna elements over a linear space of $B\lambda$, with position vector $\mathbf{r} = [r_1, r_2, \ldots, r_M]^T \in \R^M$ and $r_j \in [0, B\lambda]$ for all $j$. Each fluid antenna is connected to a dedicated RF chain, and the positions $(\mathbf{t}, \mathbf{r})$ are jointly optimized to optimize system performance.

\vspace{-2mm}
\subsection{Channel Model}\label{subsec:channel}
\subsubsection{Multipath Propagation}
We adopt a geometry-based stochastic channel model with $L$ scattering paths. The physical channel matrix from the TX to RX is expressed as
\begin{equation}\label{eq:H_FRV}
\mathbf{H}(\mathbf{t},\mathbf{r})=\sqrt{\frac{C_0 d^{-\alpha}}{L}}\sum_{\ell=1}^{L} g_\ell\,\mathbf{b}(\phi_\ell, \mathbf{r})\,\mathbf{a}^H(\theta_\ell, \mathbf{t}),
\end{equation}
in which $C_0$ is the reference channel power gain at unit distance, $d$ is the link distance, $\alpha$ is the path-loss exponent, $g_\ell \sim \mathcal{CN}(0,1)$ are i.i.d.~complex path gains, and the transmit and receive field response vectors (FRVs) are
\begin{align}
  \mathbf{a}(\theta_\ell, \mathbf{t})
  &= \bigl[e^{j\frac{2\pi}{\lambda}t_1\cos\theta_\ell},
            \ldots,
            e^{j\frac{2\pi}{\lambda}t_N\cos\theta_\ell}
     \bigr]^T \!\in \C^N,
  \label{eq:FRV_tx}\\
  \mathbf{b}(\phi_\ell, \mathbf{r})
  &= \bigl[e^{j\frac{2\pi}{\lambda}r_1\cos\phi_\ell},
            \ldots,
            e^{j\frac{2\pi}{\lambda}r_M\cos\phi_\ell}
     \bigr]^T \!\in \C^M.
  \label{eq:FRV_rx}
\end{align}
Here $\theta_\ell$ and $\phi_\ell$ denote the angle of departure (AoD) and angle of arrival (AoA) of the $\ell$-th path, respectively.

\subsubsection{Spatial Correlation under Jakes Scattering}\label{subsubsec:jakes}
Under the Jakes (Clarke) isotropic scattering model \cite{Jakes74}, the AoDs $\{\theta_\ell\}$ and AoAs $\{\phi_\ell\}$ are independent and uniformly distributed over $[0, \pi]$. Taking expectations over the random angles and the i.i.d.~path gains $\{g_\ell\}$, the spatial covariance between two TX elements at positions $t_i$ and $t_{i'}$, and two RX elements at positions $r_j$ and $r_{j'}$, is computed as
\begin{multline}\label{eq:cov_product}
\E\!\left[H_{j,i}\,H_{j',i'}^*\right]= C_0 d^{-\alpha}\underbrace{\E\!\left[e^{j\frac{2\pi}{\lambda}(t_i-t_{i'})\cos\theta}\right]}_{[\mathbf{R}_T(\mathbf{t})]_{ii'}}\\
\times\underbrace{\E\!\left[e^{j\frac{2\pi}{\lambda}(r_j-r_{j'})\cos\phi}\right]}_{[\mathbf{R}_R(\mathbf{r})]_{jj'}},
\end{multline}
where $H_{j,i}$ denotes the $(j,i)$-th entry of $\mathbf{H}$. Since $\theta \sim \mathcal{U}[0,\pi]$, the expectation evaluates exactly to the zeroth-order Bessel function of the first kind \cite{FAS22}:
\begin{align}
  \E\!\left[e^{j\frac{2\pi}{\lambda}\Delta\cos\theta}\right]
  = J_0\!\left(\frac{2\pi\Delta}{\lambda}\right),
  \label{eq:bessel_identity}
\end{align}
yielding position-dependent transmit and receive spatial correlation matrices
\begin{align}
  [\mathbf{R}_T(\mathbf{t})]_{ii'}
  &= J_0\!\left(\frac{2\pi|t_i - t_{i'}|}{\lambda}\right),~i,i' = 1,\ldots,N,\label{eq:RT}\\
  [\mathbf{R}_R(\mathbf{r})]_{jj'}
  &= J_0\!\left(\frac{2\pi|r_j - r_{j'}|}{\lambda}\right),~j,j' = 1,\ldots,M.\label{eq:RR}
\end{align}
Both $\mathbf{R}_T \in \R^{N\times N}$ and $\mathbf{R}_R \in \R^{M\times M}$ are real symmetric positive semi-definite matrices whose entries are determined by the antenna position vectors $\mathbf{t}$ and $\mathbf{r}$, respectively. Note that since $J_0(\cdot)$ is real-valued, the correlation matrices are real despite originating from a complex channel model.

\begin{remark}[Properties of the Bessel Correlation Function]\label{rem:bessel}
The zeroth-order Bessel function $J_0(x)$ governs the spatial correlation in \eqref{eq:RT} and \eqref{eq:RR}, and exhibits several properties that are central to the position optimization problem:
\begin{itemize}
\item[(i)] \emph{Monotone envelope decay:}
$|J_0(x)|$ is bounded by $\sqrt{2/(\pi x)}$ for $x > 0$, implying that the correlation magnitude decreases as the inter-element spacing increases.
\item[(ii)] \emph{Oscillatory zeros:}
$J_0(x)$ has countably many positive zeros at $x_1 \approx 2.405$, $x_2 \approx 5.520$, $x_3 \approx 8.654$, \ldots, so that zero correlation is achieved at discrete spacings $d_k = x_k \lambda / (2\pi)$, $k = 1, 2, \ldots$\,. The first zero occurs at $d_1 \approx 0.383\lambda$, which is notably \emph{smaller} than the commonly used half-wavelength spacing.
\item[(iii)] \emph{Non-monotonicity:}
$J_0(x)$ takes both positive and negative values, so a larger spacing does not always yield lower correlation, making the optimization multi-modal.
\item[(iv)] \emph{Positive semi-definiteness:}
For any set of positions $\{t_i\}$, $\mathbf{R}_T$ is positive semi-definite since $J_0(\cdot)$ is the characteristic function of the uniform distribution on $[0,\pi]$ \cite{FAS22}.
\end{itemize}
\end{remark}

\subsubsection{Eigenvalue Structure of the Correlation Matrix}
Let $\mu_1 \geq \cdots \geq \mu_N \geq 0$ denote the eigenvalues of $\mathbf{R}_T$; since $\tr(\mathbf{R}_T) = N$, we have $\sum_k \mu_k = N$. Two extreme cases are of particular interest:
\begin{itemize}
\item \emph{Full rank} ($\mathbf{R}_T = \mathbf{I}_N$): All $\mu_k = 1$, corresponding to fully independent spatial channels. This is achieved when all pairwise correlations are zero.
\item \emph{Rank deficient} ($\mathbf{R}_T \to \mathbf{1}\mathbf{1}^T/N$): $\mu_1 \to N$, $\mu_k \to 0$ for $k > 1$. This occurs when all antennas are collocated, reducing the MIMO system to a single-antenna link.
\end{itemize}
The condition number $\kappa(\mathbf{R}_T) = \mu_1/\mu_N$ quantifies the eigenvalue spread: $\kappa(\mathbf{I}_N) = 1$ (optimal), while $\kappa \to \infty$ for a rank-deficient matrix. Specifically, the goal of position optimization can be viewed as to drive $\kappa$ toward unity.

\subsubsection{Kronecker Correlated Channel Model}
Based on \eqref{eq:H_FRV}--\eqref{eq:RR}, the MIMO channel admits the \emph{Kronecker decomposition}:
\begin{align}
  \mathbf{H}(\mathbf{t},\mathbf{r})
  \;=\; \mathbf{R}_R^{1/2}(\mathbf{r})\,
        \mathbf{G}\,
        \mathbf{R}_T^{1/2}(\mathbf{t}),
  \label{eq:H_Kronecker}
\end{align}
where $\mathbf{G} \in \C^{M\times N}$ has i.i.d.~entries $\mathcal{CN}(0, C_0 d^{-\alpha})$, representing the de-correlated (virtual) channel coefficients, and $\mathbf{R}_T^{1/2}$, $\mathbf{R}_R^{1/2}$ are the unique Hermitian positive semi-definite square roots of the spatial correlation matrices.

The Kronecker model \eqref{eq:H_Kronecker} \cite{new2024an} assumes that the TX and RX scattering environments are statistically independent, which is valid when the arrays are small relative to the link distance. It separates TX and RX correlation, reducing the channel statistics to the pair $(\mathbf{R}_T, \mathbf{R}_R)$.

\begin{remark}[Correlation as a Position Function]\label{rem:corr_position}
Unlike FPA systems where $\mathbf{R}_T$ and $\mathbf{R}_R$ are fixed, in FAS they are \emph{explicit optimization variables} through $(\mathbf{t}, \mathbf{r})$. This additional DoF allows the physical layer to directly shape the eigenvalue spectra of the correlation matrices, driving them toward $\mathbf{I}$ (maximum spatial DoF) by optimizing inter-antenna spacings.
\end{remark}

\vspace{-2mm}
\subsection{Signal Model and Ergodic Capacity}\label{subsec:signal}
The received signal vector at the RX is given by
\begin{align}\label{eq:received}
\mathbf{y} = \mathbf{H}(\mathbf{t},\mathbf{r})\,\mathbf{x} + \mathbf{n},
\end{align}
where $\mathbf{x} \in \C^N$ denotes the transmitted signal vector satisfying $\E[\mathbf{x}\mathbf{x}^H] = (P/N)\mathbf{I}_N$, $P$ is the total transmit power, and $\mathbf{n} \sim \mathcal{CN}(\mathbf{0}, \sigma^2\mathbf{I}_M)$ is additive white Gaussian noise. Assuming Gaussian signaling and CSI known only at the receiver (CSIR), the instantaneous mutual information is
\begin{align}
  I(\mathbf{t},\mathbf{r};\mathbf{H})
  = \log_2\!\det\!\left(
      \mathbf{I}_M + \frac{P}{N\sigma^2}
      \mathbf{H}(\mathbf{t},\mathbf{r})\,\mathbf{H}^H(\mathbf{t},\mathbf{r})
    \right).
  \label{eq:MI}
\end{align}
Substituting \eqref{eq:H_Kronecker} into \eqref{eq:MI} and taking expectations over the random channel $\mathbf{G}$, the ergodic capacity is
\begin{align}
  C(\mathbf{t},\mathbf{r})
  &= \E_{\mathbf{G}}\!\left[
       \log_2\!\det\!\left(
         \mathbf{I}_M + \frac{P}{N\sigma^2}
         \mathbf{R}_R^{1/2}\mathbf{G}\,
         \mathbf{R}_T\,
         \mathbf{G}^H \mathbf{R}_R^{1/2}
       \right)
     \right].
  \label{eq:capacity}
\end{align}
Note that $C(\mathbf{t},\mathbf{r})$ depends on $(\mathbf{t},\mathbf{r})$ exclusively through the eigenvalue distributions of $\mathbf{R}_T(\mathbf{t})$ and $\mathbf{R}_R(\mathbf{r})$, making the correlation structure the fundamental quantity to be optimized.


\begin{remark}[Monte-Carlo Capacity Estimation]\label{rem:MC}
The expectation in \eqref{eq:capacity} does not admit a closed-form expression and we estimate it via Monte-Carlo averaging:
\begin{align}
  \hat{C}(\mathbf{t},\mathbf{r})
  = \frac{1}{S}\sum_{s=1}^{S}
    \log_2\!\det\!\left(
      \mathbf{I}_M + \frac{P}{N\sigma^2}
      \mathbf{H}_s\mathbf{H}_s^H
    \right),
  \label{eq:MC_estimate}
\end{align}
where $\mathbf{H}_s = \mathbf{R}_R^{1/2}\mathbf{G}_s\mathbf{R}_T^{1/2}$ and $\{\mathbf{G}_s\}_{s=1}^S$ are i.i.d.~samples. We use $S = 200$ for optimization and $S = 1500$ for final evaluation. The \emph{same} $\{\mathbf{G}_s\}$ are shared across all position candidates within each AO iteration for fair comparison.
\end{remark}

\vspace{-2mm}
\section{Problem Formulation}\label{sec:problem}
We seek the antenna position vectors $(\mathbf{t},\mathbf{r})$ that maximize the ergodic capacity \eqref{eq:capacity} subject to practical constraints on the antenna positions. That is,
\begin{subequations}\label{eq:P0}
\begin{align}
  \max_{\mathbf{t},\,\mathbf{r}}\quad
  & C(\mathbf{t},\mathbf{r})
  \label{eq:P0_obj}\\
  \mathrm{s.t.}\quad
  & t_i \in [0,\, A\lambda], \quad \forall\, i = 1,\ldots,N,
  \label{eq:P0_trange}\\
  & r_j \in [0,\, B\lambda], \quad \forall\, j = 1,\ldots,M,
  \label{eq:P0_rrange}\\
  & |t_i - t_{i'}| \geq d_{\min}, \quad \forall\, i \neq i',
  \label{eq:P0_tspacing}\\
  & |r_j - r_{j'}| \geq d_{\min}, \quad \forall\, j \neq j',
  \label{eq:P0_rspacing}
\end{align}
\end{subequations}
where \eqref{eq:P0_trange} (resp.\ \eqref{eq:P0_rrange}) limits each TX (resp.\ RX) antenna to the physical aperture of the array. Constraints \eqref{eq:P0_tspacing} and \eqref{eq:P0_rspacing} enforce a minimum inter-element spacing $d_{\min}$ to prevent mutual coupling and ensure mechanical feasibility.

\vspace{-2mm}
\subsection{Feasibility Analysis}
Problem~\eqref{eq:P0} is feasible if and only if
\begin{align}
A\lambda \geq (N-1)d_{\min}~~\text{and}~~B\lambda \geq (M-1)d_{\min}.
\label{eq:feasibility}
\end{align}
The feasible set for the TX positions can be expressed as
\begin{align}
  \mathcal{T} \triangleq \bigl\{
    \mathbf{t} \in [0, A\lambda]^N :
    t_1 < \cdots < t_N,\;
    t_{i+1} - t_i \geq d_{\min}
  \bigr\},
\label{eq:feasible_set}
\end{align}
which is a compact polytope in $\R^N$, guaranteeing the existence of a global maximizer.

\vspace{-2mm}
\subsection{Properties of the Optimization Problem}
\begin{remark}[Non-Convexity]\label{rem:nonconvex}
Problem \eqref{eq:P0} is non-convex because (i) $J_0(\cdot)$ is oscillatory, creating multiple local optima, and (ii) the spacing constraints define a non-convex feasible set (a union of disjoint convex regions, one per antenna ordering).
\end{remark}

\begin{remark}[Capacity Bounds]\label{rem:bounds}
The ergodic capacity $C(\mathbf{t},\mathbf{r})$ is bounded by
\begin{align}
  C_{\rm FPA} \leq C^* \leq C_{\rm iid},
  \label{eq:bounds}
\end{align}
where $C^* = \max_{(\mathbf{t},\mathbf{r})\in\mathcal{T}\times\mathcal{R}}C(\mathbf{t},\mathbf{r})$ is the globally optimal capacity, $C_{\rm FPA}$ is the capacity with fixed uniform $d_{\min}$-spaced arrays (a feasible point of~\eqref{eq:P0}), and $C_{\rm iid} = \E[\log_2\det(\mathbf{I}_M + \frac{P}{N\sigma^2} \mathbf{G}\mathbf{G}^H)]$ is the i.i.d.\ MIMO capacity upper bound achieved when $\mathbf{R}_T = \mathbf{R}_R = \mathbf{I}$.
\end{remark}

\begin{remark}[Decoupled TX/RX Structure]\label{rem:decoupled}
The objective \eqref{eq:capacity} depends on $\mathbf{t}$ and $\mathbf{r}$ only through $\mathbf{R}_T(\mathbf{t})$ and $\mathbf{R}_R(\mathbf{r})$, respectively. At high SNR, the capacity approaches $\sum_{k=1}^N \E[\log_2(1 + \frac{P}{N\sigma^2}\mu_k\nu_k)]$, where $\{\mu_k\}$ and $\{\nu_k\}$ are the eigenvalues of $\mathbf{R}_T$ and $\mathbf{R}_R$. This separable structure motivates the AO framework to be developed in Section~\ref{sec:solution}. When $N = M$ and $A = B$, symmetry implies $\mathbf{t}^* = \mathbf{r}^*$.
\end{remark}

\vspace{-2mm}
\section{Performance Analysis}\label{sec:analysis}
In this section, we derive analytical expressions that characterize the relationship between the antenna positions and the ergodic capacity, providing both design insights and verifiable benchmarks for the proposed optimization algorithm.

\vspace{-2mm}
\subsection{High-SNR Capacity Approximation}
We first present an asymptotic capacity expression valid in the high-SNR regime, which reveals the explicit dependence of the ergodic capacity on the spatial correlation structure.

\begin{proposition}[High-SNR Capacity Approximation]\label{prop:highSNR}
For the fluid MIMO system with $M = N$ and $\mathbf{G}$ having i.i.d. $\mathcal{CN}(0,1)$ entries, as $\gamma \triangleq P/(N\sigma^2) \to \infty$, the ergodic capacity admits the approximation
\begin{multline}\label{eq:C_highSNR}
C(\mathbf{t},\mathbf{r})
\approx N\log_2\gamma
  + \log_2\!\det\!\bigl(\mathbf{R}_T(\mathbf{t})\bigr)\\
  + \log_2\!\det\!\bigl(\mathbf{R}_R(\mathbf{r})\bigr)+ \kappa_N,
\end{multline}
where the constant $\kappa_N$, independent of $(\mathbf{t},\mathbf{r})$, is given by
\begin{align}
  \kappa_N
  = \frac{1}{\ln 2}\sum_{m=1}^{N}\psi(m),
  \label{eq:kappa}
\end{align}
and $\psi(\cdot)$ denotes the digamma function.
\end{proposition}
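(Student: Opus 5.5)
The approach is to pull $\gamma$ and the two correlation factors out of the determinant at high SNR, after which only a Wishart log-determinant expectation remains to compute. Since $M=N$ and (assuming $\mathbf{R}_T,\mathbf{R}_R\succ\mathbf{0}$, which the constraint $d_{\min}>0$ together with Remark~\ref{rem:bessel}(iv) makes generic; otherwise the right-hand side is $-\infty$ and the statement is trivial), all matrices in \eqref{eq:capacity} are square and almost surely invertible. As $\gamma\to\infty$ we write
\begin{align*}
\log_2\det\!\bigl(\mathbf{I}_N+\gamma\mathbf{R}_R^{1/2}\mathbf{G}\mathbf{R}_T\mathbf{G}^H\mathbf{R}_R^{1/2}\bigr)
&= N\log_2\gamma+\log_2\det\!\bigl(\mathbf{R}_R^{1/2}\mathbf{G}\mathbf{R}_T\mathbf{G}^H\mathbf{R}_R^{1/2}\bigr)+o(1),
\end{align*}
using $\log\det(\mathbf{I}+\gamma\mathbf{X})=N\log\gamma+\log\det\mathbf{X}+\log\det(\mathbf{I}+\gamma^{-1}\mathbf{X}^{-1})$. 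Multiplicativity of the determinant then splits the second term into $\log_2\det\mathbf{R}_R+\log_2\det\mathbf{R}_T+\log_2\det(\mathbf{G}\mathbf{G}^H)$. This is exactly the separable structure in \eqref{eq:C_highSNR}.

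Taking expectations, the remaining task is to evaluate $\E[\log_2\det(\mathbf{G}\mathbf{G}^H)]$ for an $N\times N$ complex Gaussian $\mathbf{G}$ with i.i.d.\ $\mathcal{CN}(0,1)$ entries. I would use the Bartlett/QR decomposition $\mathbf{G}=\mathbf{Q}\mathbf{U}$, so that $\det(\mathbf{G}\mathbf{G}^H)=\prod_{m=1}^N|U_{mm}|^2$. Here the $|U_{mm}|^2$ are independent, and $2|U_{mm}|^2\sim\chi^2_{2(N-m+1)}$; equivalently $|U_{mm}|^2\sim\mathrm{Gamma}(N-m+1,1)$. Since $\E[\ln X]=\psi(k)$ for $X\sim\mathrm{Gamma}(k,1)$, summing over $m$ gives $\E[\ln\det(\mathbf{G}\mathbf{G}^H)]=\sum_{m=1}^N\psi(m)$. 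Dividing by $\ln 2$ yields $\kappa_N$ in \eqref{eq:kappa}.

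The main obstacle is justifying that the $o(1)$ remainder survives expectation, i.e., that $\E[\log_2\det(\mathbf{I}+\gamma^{-1}\mathbf{X}^{-1})]\to 0$. Pointwise convergence holds almost surely. For domination I would bound the remainder by $N\log_2(1+\gamma^{-1}\lambda_{\min}(\mathbf{X})^{-1})\le N\log_2(1+\lambda_{\min}(\mathbf{X})^{-1})$ for $\gamma\ge1$. I would then use $\lambda_{\min}(\mathbf{X})\ge\lambda_{\min}(\mathbf{R}_R)\lambda_{\min}(\mathbf{R}_T)\lambda_{\min}(\mathbf{G}\mathbf{G}^H)$. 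Integrability of $\log(1+1/\lambda_{\min}(\mathbf{G}\mathbf{G}^H))$ holds because the smallest eigenvalue of a square complex Wishart matrix has a density bounded near zero (it is exponential for $M=N$, by Edelman's result). Dominated convergence then closes the argument, and the "$\approx$" in the statement is read as equality up to an $o(1)$ term as $\gamma\to\infty$.
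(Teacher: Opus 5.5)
Your proof is correct and follows the paper's route: drop $\mathbf{I}$ at high SNR, factor the determinant as $\gamma^N\det(\mathbf{R}_R)\det(\mathbf{R}_T)|\det(\mathbf{G})|^2$, and use $\E[\ln\det(\mathbf{G}\mathbf{G}^H)]=\sum_{m=1}^{N}\psi(m)$, adding rigor by deriving that Wishart identity via the Bartlett decomposition (the paper only cites it) and by justifying the removal of $\mathbf{I}$ with dominated convergence (the paper simply asserts it). One small correction to your caveat: feasible positions are distinct and $\cos\theta$ has a density, so $\mathbf{c}^H\mathbf{R}_T\mathbf{c}=\E\bigl|\sum_i c_i^{*}e^{j\frac{2\pi}{\lambda}t_i\cos\theta}\bigr|^2>0$ for every $\mathbf{c}\neq\mathbf{0}$; hence $\mathbf{R}_T\succ\mathbf{0}$ at every feasible point, not merely generically, and if it were singular the $o(1)$ claim would fail rather than hold trivially.
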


\begin{IEEEproof}
At high SNR ($\gamma \gg 1$), the identity matrix in \eqref{eq:capacity}
becomes negligible, yielding
\begin{align}
  C &\approx \E\!\left[
    \log_2\!\det\!\left(
      \gamma\,\mathbf{R}_R^{1/2}\mathbf{G}\,\mathbf{R}_T\,
      \mathbf{G}^H\mathbf{R}_R^{1/2}
    \right)
  \right].
  \label{eq:proof_step1}
\end{align}
Since $M = N$, the determinant of the product factors as
\begin{multline}\label{eq:proof_step2}
\det\!\left(\gamma\,\mathbf{R}_R^{1/2}\mathbf{G}\,\mathbf{R}_T\,
  \mathbf{G}^H\mathbf{R}_R^{1/2}\right)\\
= \gamma^N\det(\mathbf{R}_R)\det(\mathbf{R}_T)\left|\det(\mathbf{G})\right|^2.
\end{multline}
Taking $\log_2(\cdot)$ and the expectation, the first three terms are deterministic. The remaining term $\E[\log_2|\det(\mathbf{G})|^2] = \E[\log_2\det(\mathbf{G}\mathbf{G}^H)]$ involves a complex Wishart matrix $\mathbf{W} = \mathbf{G}\mathbf{G}^H\sim \mathcal{CW}_N(N,\mathbf{I}_N)$, for which \cite{new2024an}
\begin{align}\label{eq:wishart_logdet}
  \E[\ln\det(\mathbf{W})]  = \sum_{m=1}^{N}\psi(m).
\end{align}
Combining and converting to $\log_2$ via the factor $1/\ln 2$ yields \eqref{eq:C_highSNR} and \eqref{eq:kappa}, which completes the proof.
\end{IEEEproof}

Proposition~\ref{prop:highSNR} reveals that at high SNR, the ergodic capacity is determined by two independent terms: $\log_2\det(\mathbf{R}_T)$ and $\log_2\det(\mathbf{R}_R)$. This motivates the following characterization of the capacity penalty incurred by spatial correlation.

\vspace{-2mm}
\subsection{Capacity Loss Due to Spatial Correlation}
\begin{corollary}[Capacity Loss]\label{cor:loss}
Define the \emph{correlation-induced capacity loss} as $\Delta C \triangleq C_{\rm iid} - C(\mathbf{t},\mathbf{r})$, where $C_{\rm iid}$ denotes the i.i.d.~MIMO capacity ($\mathbf{R}_T = \mathbf{R}_R = \mathbf{I}$). At high SNR,
\begin{align}
  \Delta C
  \approx -\log_2\!\det\!\bigl(\mathbf{R}_T(\mathbf{t})\bigr)
         -\log_2\!\det\!\bigl(\mathbf{R}_R(\mathbf{r})\bigr)
  \geq 0,
  \label{eq:DeltaC}
\end{align}
with equality if and only if $\mathbf{R}_T = \mathbf{R}_R = \mathbf{I}_N$.
\end{corollary}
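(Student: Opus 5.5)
The plan is to apply Proposition~\ref{prop:highSNR} twice and subtract, and then establish the sign of the result separately with a Hadamard (equivalently, AM--GM) argument. First, specialize \eqref{eq:C_highSNR} to $\mathbf{R}_T=\mathbf{R}_R=\mathbf{I}_N$. This is a legitimate choice, since the derivation in the proof of Proposition~\ref{prop:highSNR} holds for any fixed pair of correlation matrices. Because $\log_2\det(\mathbf{I}_N)=0$, this gives $C_{\rm iid}\approx N\log_2\gamma+\kappa_N$. Subtracting \eqref{eq:C_highSNR} evaluated at the actual positions $(\mathbf{t},\mathbf{r})$ cancels the terms $N\log_2\gamma$ and $\kappa_N$, both of which are independent of position. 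The approximation in \eqref{eq:DeltaC} follows directly. Strictly, this step needs $\mathbf{R}_T$ and $\mathbf{R}_R$ nonsingular so that the log-determinants are finite. If either matrix is singular, the right-hand side is $+\infty$, which is consistent with the loss growing without bound with SNR. I will note this case separately.

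Second, I will prove the inequality $-\log_2\det(\mathbf{R}_T)-\log_2\det(\mathbf{R}_R)\ge 0$. It suffices to show $\det(\mathbf{R}_T)\le 1$, and likewise for $\mathbf{R}_R$. By Remark~\ref{rem:bessel}(iv), $\mathbf{R}_T$ is positive semi-definite. Its diagonal entries are $J_0(0)=1$, so $\tr(\mathbf{R}_T)=N$. Using the eigenvalues $\mu_k\ge0$ with $\sum_k\mu_k=N$, AM--GM gives $\det(\mathbf{R}_T)=\prod_k\mu_k\le\bigl(\tfrac1N\sum_k\mu_k\bigr)^N=1$. Hadamard's inequality, $\det(\mathbf{R}_T)\le\prod_i[\mathbf{R}_T]_{ii}=1$, gives the same bound. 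Each log-determinant is therefore nonpositive, and their negated sum is nonnegative.

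Third, I will handle the equality case. Since each of the two terms is individually nonnegative, the sum vanishes if and only if both vanish, that is, $\det(\mathbf{R}_T)=\det(\mathbf{R}_R)=1$. Equality in AM--GM forces every $\mu_k=1$. A symmetric matrix whose eigenvalues are all equal to $1$ is $\mathbf{I}_N$, and the same argument applies to $\mathbf{R}_R$. Equivalently, equality in Hadamard's inequality for a positive definite matrix holds exactly when the matrix is diagonal. The converse direction is immediate. In terms of positions, equality means every pairwise spacing $|t_i-t_{i'}|$ is a zero of $J_0(2\pi\cdot/\lambda)$.

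The only delicate point is interpretive rather than technical. The quantity proven to be $\ge0$ is the high-SNR approximant, not $\Delta C$ itself. I will state explicitly that the inequality and the equality characterization refer to the right-hand side of \eqref{eq:DeltaC}. The approximation error vanishes as $\gamma\to\infty$ whenever both correlation matrices are nonsingular. I do not expect any other step to be an obstacle.
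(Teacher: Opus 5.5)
Your proposal is correct and follows the paper's proof. You specialize Proposition~\ref{prop:highSNR} to $\mathbf{R}_T=\mathbf{R}_R=\mathbf{I}_N$, subtract to cancel $N\log_2\gamma+\kappa_N$, and apply Hadamard's inequality to the unit-diagonal positive semi-definite correlation matrices, which gives $\det(\cdot)\le 1$ with equality iff the matrix is the identity. Leading with AM--GM on the eigenvalues (using $\tr(\mathbf{R}_T)=N$) is only an equivalent variant, and your remarks on singular matrices and on the inequality referring to the high-SNR approximant are accurate points the paper leaves implicit.
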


\begin{IEEEproof}
For the i.i.d.~case, $\det(\mathbf{R}_T) = \det(\mathbf{R}_R) = 1$; hence, $C_{\rm iid} \approx N\log_2\gamma + \kappa_N$. Subtracting \eqref{eq:C_highSNR} gives \eqref{eq:DeltaC}. The non-negativity follows from Hadamard's inequality: for any positive semi-definite $\mathbf{R}$ with unit diagonal, $\det(\mathbf{R}) \leq \prod_{i=1}^{N}[\mathbf{R}]_{ii} = 1$; thus, $\log_2\det(\mathbf{R}) \leq 0$. Equality holds if and only if $\mathbf{R}$ is diagonal, i.e., $\mathbf{R} = \mathbf{I}$.
\end{IEEEproof}

\begin{remark}[Determinant Maximization Interpretation]
Corollary~\ref{cor:loss} establishes that in the high SNR regime, \eqref{eq:P0} is asymptotically equivalent to
\begin{align}
  \max_{\mathbf{t},\mathbf{r}} \;
  \det\!\bigl(\mathbf{R}_T(\mathbf{t})\bigr)
  \det\!\bigl(\mathbf{R}_R(\mathbf{r})\bigr),
  \label{eq:det_max}
\end{align}
subject to the constraints \eqref{eq:P0_trange}--\eqref{eq:P0_rspacing}. Since the determinant of a correlation matrix equals the product of its eigenvalues, the optimal positions should spread the eigenvalue spectrum as uniformly as possible, driving $\mathbf{R}_T$ and $\mathbf{R}_R$ toward the identity.
\end{remark}

\vspace{-2mm}
\subsection{Optimal Spacing Analysis for $N = 2$}
For the special case $N = 2$, we can derive the optimal inter-element spacing in closed form.

\begin{proposition}[Optimal Spacing for $N = 2$]\label{prop:N2}
Consider $N = M = 2$ with aperture $A\lambda \geq d_{\min}$. The TX correlation matrix can be written as
\begin{align}
  \mathbf{R}_T =
  \begin{bmatrix}
    1 & J_0\!\left(\frac{2\pi d}{\lambda}\right) \\[3pt]
    J_0\!\left(\frac{2\pi d}{\lambda}\right) & 1
  \end{bmatrix},
  \label{eq:RT_N2}
\end{align}
where $d = |t_1 - t_2|$ is the antenna spacing. Its determinant is
\begin{align}
  \det(\mathbf{R}_T) = 1 - J_0^2\!\left(\frac{2\pi d}{\lambda}\right).
  \label{eq:det_N2}
\end{align}
The determinant is maximized (i.e., $\det(\mathbf{R}_T) = 1$ and $\mathbf{R}_T = \mathbf{I}_2$) at the optimal spacing
\begin{align}
  d^* = \frac{x_1\lambda}{2\pi} \approx 0.383\lambda,
  \label{eq:dstar}
\end{align}
where $x_1 \approx 2.4048$ is the first positive zero of $J_0(x)$.
\end{proposition}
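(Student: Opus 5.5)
The plan is to reduce everything to the scalar function $f(d)=1-J_0^2(2\pi d/\lambda)$ and use Hadamard's inequality, as in Corollary~\ref{cor:loss}, to identify the global maximum value.

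\textbf{Step 1: form of $\mathbf{R}_T$ and its determinant.} For $N=2$, the definition \eqref{eq:RT} gives unit diagonal entries, since $J_0(0)=1$. The off-diagonal entries equal $J_0(2\pi d/\lambda)$, where $d=|t_1-t_2|$. This yields \eqref{eq:RT_N2}. Expanding the $2\times 2$ determinant then gives \eqref{eq:det_N2} directly.

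\textbf{Step 2: upper bound on the determinant.} Since $J_0^2\ge 0$, we have $\det(\mathbf{R}_T)\le 1$ for every spacing. The same bound follows from Hadamard's inequality as used in Corollary~\ref{cor:loss}. Equality holds if and only if $J_0(2\pi d/\lambda)=0$, which is exactly the condition $\mathbf{R}_T=\mathbf{I}_2$. The maximizers are therefore precisely the spacings $d_k=x_k\lambda/(2\pi)$ from Remark~\ref{rem:bessel}(ii).

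\textbf{Step 3: selecting $d^*=d_1$.} All zeros attain the same maximal value $1$, so the statement should be read as identifying the \emph{smallest} globally optimal spacing, or equivalently the one requiring the least aperture. I will argue this explicitly rather than claim uniqueness.

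On $[0,x_1)$, $J_0$ is strictly positive and decreasing: $J_0'=-J_1<0$ there, because $J_1>0$ on $(0,j_{1,1})$ and $j_{1,1}\approx 3.83>x_1$. Hence $f$ is strictly increasing on $[0,d_1]$. Any spacing below $d_1$ is therefore strictly suboptimal, and $d_1$ is the first point at which the global maximum is reached.

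Feasibility requires $d_{\min}\le d_1\le A\lambda$. I will add this as an implicit assumption of the statement, since $A\lambda\ge d_{\min}$ alone does not guarantee it. Under that assumption, placing $t_1=0$ and $t_2=d_1$ is feasible. The same argument applies verbatim to $\mathbf{R}_R$ with $M=2$. By Proposition~\ref{prop:highSNR}, the high-SNR capacity then attains the i.i.d.\ value, which is global optimality for \eqref{eq:det_max}.

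\textbf{Main obstacle.} The only delicate point is Step 3's monotonicity on $[0,x_1]$, together with the interlacing fact that the first zero of $J_1$ exceeds $x_1$. I will cite standard Bessel interlacing for this.
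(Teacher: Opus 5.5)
Your proof is correct and follows the paper's argument: $\det(\mathbf{R}_T)=1-J_0^2(2\pi d/\lambda)\le 1$, with equality exactly at the zeros of $J_0$, and $d^*$ is the first of them. Your refinements are valid and go beyond what the paper states. All zeros $d_k$ are equally optimal, so $d^*$ is only the smallest maximizer; that already follows from the definition of the first zero, so the $J_1$ interlacing step is optional. Also, the hypothesis $A\lambda\ge d_{\min}$ should really be $d_{\min}\le d^*\le A\lambda$ (and $d^*\le B\lambda$ on the RX side) for $d^*$ to be feasible.
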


\begin{IEEEproof}
We have $\det(\mathbf{R}_T) = 1 - \rho^2(d)$ where $\rho(d) = J_0(2\pi d/\lambda)$. Since $|\rho(d)| \leq 1$, $\det(\mathbf{R}_T) \leq 1$ with equality if and only if $\rho(d) = 0$, i.e., $J_0(2\pi d/\lambda) = 0$. The first positive zero of $J_0(x)$ occurs at $x_1 \approx 2.4048$, giving $d^* = x_1\lambda/(2\pi) \approx 0.383\lambda$. At this spacing, $\mathbf{R}_T = \mathbf{I}_2$, achieving full decorrelation and eliminating the capacity loss.
\end{IEEEproof}

\begin{remark}[Beyond $N = 2$]\label{rem:beyond_N2}
For $N > 2$, achieving $\mathbf{R}_T = \mathbf{I}_N$ requires $J_0(2\pi|t_i - t_{i'}|/\lambda) = 0$ for all $i \neq i'$. With uniform spacing $d$, the pairwise distances are $d, 2d, 3d, \ldots, (N-1)d$. While $d^* = 0.383\lambda$ zeroes out $J_0$ for distance $d$, the Bessel function $J_0(2\pi \cdot 2d^*/\lambda)= J_0(2x_1) \approx -0.178 \neq 0$. Therefore, uniform spacing at $d^*$ does not achieve $\mathbf{R}_T = \mathbf{I}_N$. This fundamental limitation motivates the AO-PSO to be developed in the next section, which can optimize non-uniform antenna spacings to maximize $\det(\mathbf{R}_T)$.
\end{remark}

\vspace{-2mm}
\subsection{Low-SNR Regime Analysis}
The high-SNR approximation in Proposition~\ref{prop:highSNR} becomes inaccurate when $\gamma = P/(N\sigma^2)$ is small. In the low-SNR regime, the capacity exhibits fundamentally different behavior with respect to the antenna positions.

\begin{proposition}[Low-SNR Capacity Approximation]\label{prop:lowSNR}
At low SNR ($\gamma \to 0$), the ergodic capacity satisfies
\begin{align}
  C(\mathbf{t},\mathbf{r})
  \approx \frac{\gamma}{\ln 2}
  \tr\bigl(\mathbf{R}_T(\mathbf{t})\bigr)
  \tr\bigl(\mathbf{R}_R(\mathbf{r})\bigr)
  = \frac{NM\gamma}{\ln 2},
  \label{eq:C_lowSNR}
\end{align}
which is \emph{independent} of the antenna positions.
\end{proposition}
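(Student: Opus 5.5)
The plan is a first-order expansion of the log-determinant in \eqref{eq:capacity} around $\gamma=0$, followed by an exact evaluation of the resulting expectation using the Kronecker structure \eqref{eq:H_Kronecker}. Write $\mathbf{X}\triangleq\mathbf{R}_R^{1/2}\mathbf{G}\mathbf{R}_T\mathbf{G}^H\mathbf{R}_R^{1/2}\succeq\mathbf{0}$. Let $\lambda_k(\mathbf{X})\ge 0$ be its eigenvalues. Then
\begin{align*}
\log_2\det(\mathbf{I}_M+\gamma\mathbf{X})=\frac{1}{\ln 2}\sum_k\ln\bigl(1+\gamma\lambda_k(\mathbf{X})\bigr).
\end{align*}
Using $x-x^2/2\le\ln(1+x)\le x$ for $x\ge0$, we get
\begin{align*}
\frac{\gamma\tr(\mathbf{X})}{\ln 2}-\frac{\gamma^2\tr(\mathbf{X}^2)}{2\ln 2}\le\log_2\det(\mathbf{I}_M+\gamma\mathbf{X})\le\frac{\gamma\tr(\mathbf{X})}{\ln 2}.
\end{align*}
Taking expectations over $\mathbf{G}$ then gives $C=\frac{\gamma}{\ln 2}\E[\tr(\mathbf{X})]+O(\gamma^2)$, provided $\E[\tr(\mathbf{X}^2)]<\infty$. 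That holds because $\mathbf{X}^2$ has entries that are quartic polynomials in Gaussian variables, so all the needed moments are finite. This sandwich makes the ``$\approx$'' precise as $\gamma\to0$ and avoids any interchange-of-limit issues.

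Next I evaluate $\E[\tr(\mathbf{X})]$. By cyclicity, $\tr(\mathbf{X})=\tr(\mathbf{R}_R\mathbf{G}\mathbf{R}_T\mathbf{G}^H)=\sum_{j,j',i,i'}[\mathbf{R}_R]_{j'j}G_{ji}[\mathbf{R}_T]_{ii'}G^*_{j'i'}$. With i.i.d. unit-variance entries, as in Proposition~\ref{prop:highSNR}, we have $\E[G_{ji}G^*_{j'i'}]=\delta_{jj'}\delta_{ii'}$. This collapses the sum to $\tr(\mathbf{R}_T)\tr(\mathbf{R}_R)$. If the path-loss factor $C_0d^{-\alpha}$ is kept, it simply multiplies the result, and I would state this normalization explicitly. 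Equivalently, one can note that $\tr(\mathbf{R}_R\otimes\mathbf{R}_T)=\tr(\mathbf{R}_R)\tr(\mathbf{R}_T)$, which ties the result to the covariance in \eqref{eq:cov_product}.

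Finally, by \eqref{eq:RT} and \eqref{eq:RR}, every diagonal entry equals $J_0(0)=1$. Hence $\tr(\mathbf{R}_T)=N$ and $\tr(\mathbf{R}_R)=M$ for every $(\mathbf{t},\mathbf{r})$, which gives $NM\gamma/\ln 2$ independent of the positions. The main obstacle is only to justify that the neglected term is uniformly $O(\gamma^2)$. I would bound it by $\frac{\gamma^2}{2\ln2}\E[\tr(\mathbf{X})^2]\le\frac{\gamma^2}{2\ln 2}\|\mathbf{R}_T\|^2\|\mathbf{R}_R\|^2\E[\|\mathbf{G}\|_F^4]$. Together with $\|\mathbf{R}_T\|\le N$ and $\|\mathbf{R}_R\|\le M$ (from trace and positive semidefiniteness), this makes the remainder uniform over the feasible set. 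Position dependence therefore enters only at second order in $\gamma$.
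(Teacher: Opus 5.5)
Your proof is correct and follows the same route as the paper: a first-order expansion of the log-determinant, then the identity $\E[\tr(\mathbf{R}_R\mathbf{G}\mathbf{R}_T\mathbf{G}^H)]=\tr(\mathbf{R}_T)\tr(\mathbf{R}_R)$ (the paper writes this as $\E[\mathbf{G}\mathbf{R}_T\mathbf{G}^H]=\tr(\mathbf{R}_T)\mathbf{I}_M$), and finally $J_0(0)=1$ on the diagonals. Your sandwich $x-x^2/2\le\ln(1+x)\le x$ and the uniform bound on the $O(\gamma^2)$ remainder make rigorous the ``$\approx$'' that the paper leaves informal.
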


\begin{IEEEproof}
Using the first-order Taylor expansion $\log_2\det(\mathbf{I} + \gamma\mathbf{A}) \approx \frac{\gamma}{\ln 2}\tr(\mathbf{A})$ for $\gamma \to 0$,
and substituting into~\eqref{eq:capacity} gives
\begin{multline}\label{eq:lowSNR_proof}
  C\approx \frac{\gamma}{\ln 2}
  \E\bigl[\tr\bigl(\mathbf{R}_R^{1/2}\mathbf{G}\,\mathbf{R}_T\,
  \mathbf{G}^H\mathbf{R}_R^{1/2}\bigr)\bigr]\\
= \frac{\gamma}{\ln 2}\tr\bigl(\mathbf{R}_R\bigr)\tr\bigl(\mathbf{R}_T\bigr),
\end{multline}
where the last equality follows from $\E[\mathbf{G}\,\mathbf{R}_T\,\mathbf{G}^H] = \tr(\mathbf{R}_T)\mathbf{I}_M$ for i.i.d.~$\mathbf{G}$. Since $\tr(\mathbf{R}_T) = N$ and $\tr(\mathbf{R}_R) = M$ regardless of the antenna positions, the result follows.
\end{IEEEproof}

\begin{remark}[SNR-Dependent Benefit of FAS]
\label{rem:snr_benefit}
Proposition~\ref{prop:lowSNR} reveals a fundamental dichotomy: at low SNR, the capacity is determined solely by the total received power ($NM\gamma$) and is insensitive to the spatial correlation structure, so all placement schemes perform similarly. At high SNR, the multiplexing gain dominates and depends on $\det(\mathbf{R}_T)\cdot\det(\mathbf{R}_R)$ (Proposition~\ref{prop:highSNR}). Thus, FAS position optimization provides the greatest benefit in the moderate-to-high SNR regime, where the correlation-induced capacity loss $\Delta C$ in~\eqref{eq:DeltaC} is most significant.
\end{remark}

\vspace{-2mm}
\subsection{Capacity Scaling with Aperture Size}
\begin{proposition}[Minimum Aperture for Full Decorrelation ($N = 2$)]\label{prop:aperture}
For $N = 2$, the minimum aperture for full decorrelation is $A_{\min}\lambda = d^* \approx 0.383\lambda$. For $N > 2$, full decorrelation is generally unattainable, but $\Delta C$ decreases monotonically with increasing aperture since $\mathcal{T}(A_1) \supset \mathcal{T}(A_0)$ for $A_1 > A_0$, and $|J_0(x)| \leq \sqrt{2/(\pi x)}$ ensures $\det(\mathbf{R}_T) \to 1$ as $A \to \infty$. In practice, $A = 2\lambda$--$3\lambda$ suffices for moderate $N$.
\end{proposition}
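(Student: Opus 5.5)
The plan is to prove the three claims of Proposition~\ref{prop:aperture} in turn: the exact minimum aperture for $N=2$, the monotonicity of the optimal $\Delta C$ in $A$, and the limit $\det(\mathbf{R}_T)\to 1$ as $A\to\infty$.

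\textbf{Minimum aperture for $N=2$.} By Proposition~\ref{prop:N2}, full decorrelation ($\mathbf{R}_T=\mathbf{I}_2$) holds if and only if $J_0(2\pi d/\lambda)=0$. That is, $2\pi d/\lambda$ must equal one of the positive zeros $x_k$ listed in Remark~\ref{rem:bessel}(ii). The smallest admissible spacing is therefore $d_1=x_1\lambda/(2\pi)$. Any feasible placement has $d=|t_1-t_2|\le A\lambda$, so full decorrelation requires $A\lambda\ge d^*$. Conversely, if $A\lambda\ge d^*$ and $d^*\ge d_{\min}$, the placement $t_1=0$, $t_2=d^*$ is feasible. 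Hence $A_{\min}\lambda=d^*$, under the implicit assumption $d_{\min}\le d^*$, which I will state explicitly.

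\textbf{Monotonicity in $A$.} Here the claim must be read as concerning the optimized loss $\Delta C^*(A)=C_{\rm iid}-\max_{\mathcal{T}(A)\times\mathcal{R}}C$. I will first show the nesting $\mathcal{T}(A_0)\subseteq\mathcal{T}(A_1)$ for $A_1>A_0$. This is immediate from \eqref{eq:feasible_set}: only the box constraint $[0,A\lambda]^N$ depends on $A$, and it only enlarges. A maximum over a larger set is no smaller, so $\Delta C^*$ is non-increasing, with the maxima attained by compactness. The same argument applies verbatim to the high-SNR surrogate $-\log_2\det(\mathbf{R}_T)$ from Corollary~\ref{cor:loss}. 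The statement says ``decreases monotonically''; I will prove non-increasing, which is all the nesting gives.

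\textbf{Asymptotic decorrelation.} I will construct a feasible uniform array with spacing $D=A\lambda/(N-1)$. Every off-diagonal entry then has argument $x\ge 2\pi D/\lambda$. By Remark~\ref{rem:bessel}(i), each entry satisfies
\[
|[\mathbf{R}_T]_{ii'}|\le\epsilon_A\triangleq\sqrt{\lambda(N-1)/(\pi^2 A\lambda)},
\]
which tends to $0$ as $A\to\infty$. Writing $\mathbf{R}_T=\mathbf{I}+\mathbf{E}$, I will bound the spectral norm by $\|\mathbf{E}\|_2\le\|\mathbf{E}\|_F\le N\epsilon_A$. Then every eigenvalue lies in $[1-N\epsilon_A,1+N\epsilon_A]$, which gives
\[
\det(\mathbf{R}_T)\ge(1-N\epsilon_A)^N\to1.
\]
Combined with Hadamard's bound $\det\le1$ from Corollary~\ref{cor:loss}, this yields $\max_{\mathcal{T}(A)}\det(\mathbf{R}_T)\to1$. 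By the high-SNR approximation, $\Delta C^*\to0$.

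The main obstacle is the phrase ``for $N>2$ full decorrelation is generally unattainable.'' Proving it rigorously would require showing that no configuration makes all $\binom{N}{2}$ pairwise distances Bessel zeros simultaneously. For $N=3$ this needs $x_a+x_b=x_c$ among zeros (with collinear ordering), which is a number-theoretic statement about Bessel zeros. I would treat it only as ``generic,'' supported by Remark~\ref{rem:beyond_N2} and the asymptotic spacing $x_k\approx(k-\tfrac14)\pi$, under which $x_a+x_b\approx(a+b-\tfrac12)\pi$ is never of the form $(c-\tfrac14)\pi$. The remaining claim, $A=2\lambda$--$3\lambda$ sufficing, is empirical and would be justified numerically rather than proved.
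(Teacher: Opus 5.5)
Your proposal is correct and follows the same route as the paper, which justifies the proposition only inline: Proposition~\ref{prop:N2} gives the $N=2$ aperture, the nesting $\mathcal{T}(A_0)\subseteq\mathcal{T}(A_1)$ gives monotonicity, and the envelope bound $|J_0(x)|\le\sqrt{2/(\pi x)}$ gives $\det(\mathbf{R}_T)\to 1$. Your uniform-array construction with $\det(\mathbf{R}_T)\ge(1-N\epsilon_A)^N$, the explicit assumption $d_{\min}\le d^*$, and the ``non-increasing'' reading make these steps precise, but one aside needs fixing: with $\mathcal{R}$ held fixed, $\Delta C^*$ does not tend to $0$, because the RX term $-\log_2\det(\mathbf{R}_R)$ remains, so that claim also requires $B\to\infty$ and is better argued by continuity of the exact ergodic capacity in $(\mathbf{R}_T,\mathbf{R}_R)$ than by the high-SNR approximation.
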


\vspace{-2mm}
\section{Proposed AO-PSO Algorithm}\label{sec:solution}
We propose an AO framework that decomposes \eqref{eq:P0} into two subproblems, each solved by a PSO subroutine. PSO is a derivative-free global optimizer that handles the Monte-Carlo-estimated, non-differentiable fitness landscape with bound and spacing constraints via a lightweight projection step.

\vspace{-2mm}
\subsection{AO Framework}
Starting from a feasible initial point, AO alternates:

\textbf{Step 1 (TX update):}
With $\mathbf{r}$ held fixed, solve
\begin{align}\label{eq:Ptx}
  \mathbf{t}^{(k+1)} = \arg\max_{\mathbf{t}} \; C(\mathbf{t},\mathbf{r}^{(k)})~~\mathrm{s.t.}~~\eqref{eq:P0_trange},\,\eqref{eq:P0_tspacing}.
\end{align}

\textbf{Step 2 (RX update):}
With $\mathbf{t}^{(k+1)}$ held fixed, solve 
\begin{align}\label{eq:Prx}
  \mathbf{r}^{(k+1)} = \arg\max_{\mathbf{r}} \; C(\mathbf{t}^{(k+1)},\mathbf{r})~~\mathrm{s.t.}~~\eqref{eq:P0_rrange},\,\eqref{eq:P0_rspacing}.
\end{align}

\begin{algorithm}[t]
\caption{AO-PSO for Joint FAS Position Optimization}\label{alg:AO}
{\small 
\begin{algorithmic}[1]
\REQUIRE Apertures $A,B$; antenna counts $N,M$; power $P$;
         spacing $d_{\min}$; Monte-Carlo samples $S$;
         PSO parameters $(Z, I_{\rm PSO}, w_{\max}, w_{\min}, c_1, c_2)$;
         AO tolerance $\varepsilon$.
\ENSURE  Optimized positions $\mathbf{t}^*, \mathbf{r}^*$.
\STATE Initialize $\mathbf{t}^{(0)},\mathbf{r}^{(0)}$ by uniform spacing
       in $[0,A\lambda]$ and $[0,B\lambda]$, respectively.
\STATE Set $k \leftarrow 0$,
       $C^{(-1)} \leftarrow 0$.
\REPEAT
  \STATE $\mathbf{t}^{(k+1)} \leftarrow
         \textsc{PSO}\bigl(\mathbf{r}^{(k)},\,\text{TX side}\bigr)$
         \hfill $\triangleright$ solve \eqref{eq:Ptx}
  \STATE $\mathbf{r}^{(k+1)} \leftarrow
         \textsc{PSO}\bigl(\mathbf{t}^{(k+1)},\,\text{RX side}\bigr)$
         \hfill $\triangleright$ solve \eqref{eq:Prx}
  \STATE $C^{(k)} \leftarrow C\bigl(\mathbf{t}^{(k+1)},\mathbf{r}^{(k+1)}\bigr)$
         via \eqref{eq:capacity_MC}
  \STATE $k \leftarrow k+1$
\UNTIL{$\bigl|C^{(k)} - C^{(k-1)}\bigr| \leq \varepsilon$}
\STATE $\mathbf{t}^* \leftarrow \mathbf{t}^{(k)}$,\;
       $\mathbf{r}^* \leftarrow \mathbf{r}^{(k)}$.
\end{algorithmic}}
\end{algorithm}

\vspace{-2mm}
\subsection{PSO Subroutine for Each Subproblem}\label{subsec:pso}
Here, we describe the PSO subroutine for the TX subproblem \eqref{eq:Ptx}; the RX subproblem is handled identically with $\mathbf{t} \leftrightarrow
\mathbf{r}$, $N \leftrightarrow M$, and $A \leftrightarrow B$.

\subsubsection{Particle Representation and Initialization}
Each particle $z \in \{1,\ldots,Z\}$ represents a candidate position vector $\mathbf{t}_z \in \R^N$.  We initialize positions by drawing $N$ order statistics of a uniform distribution over $[0, A\lambda]$ and then sorting them to automatically satisfy the spacing constraint:
\begin{align}
  t_{z,1}^{(0)} < t_{z,2}^{(0)} < \cdots < t_{z,N}^{(0)}, \quad
  t_{z,i+1}^{(0)} - t_{z,i}^{(0)} \geq d_{\min}.\nonumber
\end{align}
All particle velocities are initialized to zero: $\mathbf{v}_{z}^{(0)} = \mathbf{0}$.

\subsubsection{Fitness Function Evaluation}
The fitness of particle $z$ at iteration $\ell$ is the estimated ergodic capacity with the fixed RX correlation matrix $\mathbf{R}_R$ (from the current AO iterate):
\begin{align}
  f\bigl(\mathbf{t}_z^{(\ell)}\bigr)
  &= \frac{1}{S}\sum_{s=1}^{S}
    \log_2\!\det\!\Bigl(
      \mathbf{I}_M + \tfrac{P}{N\sigma^2}
      \mathbf{R}_R^{\frac{1}{2}}
      \mathbf{G}_s
      \mathbf{R}_T\!\bigl(\mathbf{t}_z^{(\ell)}\bigr)
      \mathbf{G}_s^H
      \mathbf{R}_R^{\frac{1}{2}}
    \Bigr),
  \label{eq:capacity_MC}
\end{align}
where $\{\mathbf{G}_s\}_{s=1}^S$ are the same i.i.d.~$\mathcal{CN}(\mathbf{0},\,C_0 d^{-\alpha}\mathbf{I})$ realizations shared across all particles in the same AO iteration.

\subsubsection{Velocity and Position Update}
At each PSO iteration $\ell$, the personal best and global best are updated as
\begin{align}
  \mathbf{p}_z^{(\ell)}
  &= \arg\max\bigl\{f(\mathbf{t}_z^{(0)}),\ldots,f(\mathbf{t}_z^{(\ell)})\bigr\},
  \label{eq:pbest}\\
  \mathbf{g}^{(\ell)}
  &= \arg\max_z \; f\bigl(\mathbf{p}_z^{(\ell)}\bigr).
  \label{eq:gbest}
\end{align}
The velocity of particle $z$ is updated with a linearly decaying inertia weight $w^{(\ell)}$:
\begin{align}
  w^{(\ell)} &= w_{\max} - \frac{(w_{\max}-w_{\min})\ell}{I_{\rm PSO}},
  \label{eq:inertia}\\
  \mathbf{v}_z^{(\ell+1)}
  &= w^{(\ell)}\mathbf{v}_z^{(\ell)}
   + c_1 \varepsilon_1 \!\odot\!(\mathbf{p}_z^{(\ell)} - \mathbf{t}_z^{(\ell)})
   \notag\\
  &\quad
   + c_2 \varepsilon_2 \!\odot\!(\mathbf{g}^{(\ell)} - \mathbf{t}_z^{(\ell)}),
  \label{eq:vel_update}
\end{align}
where $c_1$ and $c_2$ are the cognitive and social learning factors, $\varepsilon_1, \varepsilon_2 \sim \mathcal{U}[0,1]^N$ are element-wise random vectors, and $\odot$ denotes the Hadamard product. The position is updated as
\begin{align}
  \tilde{\mathbf{t}}_z^{(\ell+1)}
  = \mathbf{t}_z^{(\ell)} + \mathbf{v}_z^{(\ell+1)}.
  \label{eq:pos_update}
\end{align}

\subsubsection{Feasibility Projection}
To restore feasibility after the position update, we apply the projection operator $\mathcal{P}(\cdot)$ in two stages:
\begin{itemize}
  \item \textit{Bound clipping}: clip each coordinate to $[0,A\lambda]$:
  \begin{align}
    [\mathcal{P}_1(\tilde{\mathbf{t}})]_i
    = \min\!\bigl(A\lambda,\,\max(0,\tilde{t}_i)\bigr).
    \label{eq:clip}
  \end{align}
  \item \textit{Spacing enforcement}: sort the clipped vector in ascending order, and then iteratively push overlapping antennas apart until
  $t_{i+1} - t_i \geq d_{\min}$ for all $i$:
  \begin{align}
    \mathbf{t}_z^{(\ell+1)} = \mathcal{P}_2\!\bigl(\mathcal{P}_1(\tilde{\mathbf{t}}_z^{(\ell+1)})\bigr).
    \label{eq:proj}
  \end{align}
\end{itemize}
The PSO subroutine is summarized in Algorithm~\ref{alg:PSO}.

\begin{algorithm}[t]
\caption{\textsc{PSO} Subroutine (TX side)}\label{alg:PSO}
{\small 
\begin{algorithmic}[1]
\REQUIRE Fixed $\mathbf{R}_R$; channel samples $\{\mathbf{G}_s\}$;
         parameters $Z, I_{\rm PSO}, w_{\max}, w_{\min}, c_1, c_2$.
\ENSURE  Optimized TX positions $\mathbf{t}^*$.
\STATE Initialize $\{\mathbf{t}_z^{(0)}\}$ by sorted uniform sampling;
       $\mathbf{v}_z^{(0)} \leftarrow \mathbf{0}$,\;
       $\mathbf{p}_z \leftarrow \mathbf{t}_z^{(0)}$,\;
       $\mathbf{g} \leftarrow \arg\max_z f(\mathbf{t}_z^{(0)})$.
\FOR{$\ell = 0, 1, \ldots, I_{\rm PSO}-1$}
  \FOR{$z = 1$ \TO $Z$}
    \STATE Draw $\varepsilon_1, \varepsilon_2 \sim \mathcal{U}[0,1]^N$.
    \STATE Update velocity via \eqref{eq:vel_update}.
    \STATE Update position via \eqref{eq:pos_update}.
    \STATE Project: $\mathbf{t}_z^{(\ell+1)} \leftarrow
           \mathcal{P}_2(\mathcal{P}_1(\tilde{\mathbf{t}}_z^{(\ell+1)}))$.
    \STATE Evaluate $f(\mathbf{t}_z^{(\ell+1)})$ via \eqref{eq:capacity_MC}.
    \IF{$f(\mathbf{t}_z^{(\ell+1)}) > f(\mathbf{p}_z)$}
      \STATE $\mathbf{p}_z \leftarrow \mathbf{t}_z^{(\ell+1)}$.
    \ENDIF
  \ENDFOR
  \STATE $\mathbf{g} \leftarrow \arg\max_z f(\mathbf{p}_z)$.
\ENDFOR
\STATE $\mathbf{t}^* \leftarrow \mathbf{g}$.
\end{algorithmic}}
\end{algorithm}

\vspace{-2mm}
\subsection{Convergence and Complexity Analysis}\label{subsec:complexity}
\subsubsection{Convergence}
\begin{proposition}[Monotone Convergence]\label{prop:convergence}
The sequence $\{C(\mathbf{t}^{(k)},\mathbf{r}^{(k)})\}_{k\geq 0}$ generated by Algorithm~\ref{alg:AO} is monotonically non-decreasing and bounded above by the i.i.d.~MIMO capacity $C_{\rm iid}$; hence it converges to a limit $C^\infty \leq C_{\rm iid}$.
\end{proposition}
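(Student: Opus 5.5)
The plan is the standard block-coordinate ascent argument, organised into three steps.

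\emph{Step 1 (monotonicity of the AO sweep).} I would first show that each half-step of Algorithm~\ref{alg:AO} cannot decrease the objective. The key observation is that the PSO subroutine (Algorithm~\ref{alg:PSO}) returns the global best $\mathbf{g}$. This is the maximum of $f$ over all personal bests, and the personal bests are only replaced on strict improvement. So if the previous iterate $\mathbf{t}^{(k)}$ is included in the initial swarm (warm start), or is kept as a fallback candidate, then
\begin{align}
f(\mathbf{t}^{(k+1)}) \geq f(\mathbf{t}^{(k)}).
\end{align}
This gives $C(\mathbf{t}^{(k+1)},\mathbf{r}^{(k)}) \geq C(\mathbf{t}^{(k)},\mathbf{r}^{(k)})$. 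The same argument on the RX side gives $C(\mathbf{t}^{(k+1)},\mathbf{r}^{(k+1)}) \geq C(\mathbf{t}^{(k+1)},\mathbf{r}^{(k)})$. Chaining the two inequalities yields a non-decreasing sequence.

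Feasibility is preserved throughout because every PSO candidate passes through $\mathcal{P}_2\circ\mathcal{P}_1$. This keeps the iterates in $\mathcal{T}\times\mathcal{R}$, which is nonempty by \eqref{eq:feasibility}.

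\emph{Step 2 (upper bound).} For every feasible $(\mathbf{t},\mathbf{r})$ I would show $C(\mathbf{t},\mathbf{r}) \leq C_{\rm iid}$, as asserted in Remark~\ref{rem:bounds}. To justify it at every SNR, and not only via Corollary~\ref{cor:loss}, I would use the concavity of $\log\det(\mathbf{I}+\gamma\mathbf{X}\mathbf{G}\mathbf{Y}\mathbf{G}^H\mathbf{X})$-type functionals in the covariance. Concretely, the map $\mathbf{R}_T \mapsto \E[\log_2\det(\mathbf{I}+\gamma \mathbf{R}_R^{1/2}\mathbf{G}\mathbf{R}_T\mathbf{G}^H\mathbf{R}_R^{1/2})]$ is concave. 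It is also invariant under $\mathbf{R}_T\mapsto \mathbf{U}\mathbf{R}_T\mathbf{U}^H$ for unitary $\mathbf{U}$, because $\mathbf{G}$ is unitarily invariant. Averaging over permutations and sign-diagonal unitaries maps $\mathbf{R}_T$ to its diagonal, which equals $\mathbf{I}_N$ since the diagonal entries are $J_0(0)=1$. Jensen's inequality then gives a value at least as large at $\mathbf{R}_T=\mathbf{I}_N$. Repeating the argument for $\mathbf{R}_R$ gives $C\leq C_{\rm iid}$.

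A simpler fallback is available if needed. The objective is continuous in $(\mathbf{t},\mathbf{r})$ on the compact set $\mathcal{T}\times\mathcal{R}$, so it is bounded above by its maximum $C^*$, and Remark~\ref{rem:bounds} gives $C^*\leq C_{\rm iid}$.

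\emph{Step 3 (convergence).} A monotone non-decreasing real sequence that is bounded above converges by the monotone convergence theorem. This gives a limit $C^\infty=\sup_k C^{(k)}\leq C_{\rm iid}$.

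\emph{Main obstacle.} The delicate point is Step 1, because PSO is stochastic and the objective is evaluated through the Monte-Carlo estimate \eqref{eq:capacity_MC}. Monotonicity then holds for the sample-average objective built from a fixed set $\{\mathbf{G}_s\}$, and only if the incumbent is retained. I would make both conditions explicit: keep the incumbent among the candidates, and compare values under common random numbers (or treat $C$ as the exact expectation). Under these conditions the inequality is deterministic, and the bound in Step 2 applies equally to the sample average, since the concavity and invariance argument holds per realization up to the averaging over $\mathbf{U}$.
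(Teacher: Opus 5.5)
Your Steps 1 and 3 are the paper's argument. With the incumbent $\mathbf{t}^{(k)}$ warm-started in the swarm, the returned global best dominates it. The TX and RX half-step inequalities then chain, and the monotone convergence theorem concludes.

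The genuine difference is Step 2. The paper only cites Remark~\ref{rem:bounds}, which is asserted without proof. The only related argument in the paper (Hadamard's inequality in Corollary~\ref{cor:loss}) concerns the high-SNR approximation, not $C$ itself. Your concavity--invariance--Jensen argument closes this gap:
\begin{itemize}
\item For fixed $\mathbf{G}$, the map $\mathbf{R}_T\mapsto\log_2\det(\mathbf{I}_M+\gamma\mathbf{R}_R^{1/2}\mathbf{G}\mathbf{R}_T\mathbf{G}^H\mathbf{R}_R^{1/2})$ is concave.
\item Since $\mathbf{G}\mathbf{D}\stackrel{d}{=}\mathbf{G}$ for sign matrices $\mathbf{D}$, the expectation is invariant under $\mathbf{R}_T\mapsto\mathbf{D}\mathbf{R}_T\mathbf{D}$.
\item The average of $\mathbf{D}\mathbf{R}_T\mathbf{D}$ over the $2^N$ sign patterns is already $\operatorname{diag}(\mathbf{R}_T)=\mathbf{I}_N$, so permutations are unnecessary.
\end{itemize}
On the RX side, first use $\det(\mathbf{I}_M+\mathbf{X}\mathbf{Y})=\det(\mathbf{I}_N+\mathbf{Y}\mathbf{X})$ so that $\mathbf{R}_R$ enters affinely. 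Your route gives an exact, non-asymptotic bound at every SNR; the paper's is shorter but rests on an unproved remark.

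Your explicit hypotheses are also well placed. Algorithm~\ref{alg:PSO} as written initializes the swarm without the incumbent, and the samples $\{\mathbf{G}_s\}$ are shared only within an AO iteration. The paper's proof therefore tacitly assumes both a warm start and an exact-expectation objective.

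One claim in your final paragraph is false, though the proposition as stated does not need it. The Step 2 bound does \emph{not} carry over to the sample average with a fixed set $\{\mathbf{G}_s\}$. Concavity holds per realization, but the invariance $\hat{C}(\mathbf{D}\mathbf{R}_T\mathbf{D})=\hat{C}(\mathbf{R}_T)$ holds only in distribution. For example, take $M=1$, $N=2$, $S=1$ with sample $\mathbf{g}=[g_1,g_2]$. The estimate is $\log_2(1+\gamma(|g_1|^2+|g_2|^2+2\rho\,\mathrm{Re}(g_1g_2^*)))$, which exceeds its $\rho=0$ value whenever $\rho\,\mathrm{Re}(g_1g_2^*)>0$.

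Under common random numbers you therefore keep monotonicity and convergence, with boundedness coming from continuity on the compact set $\mathcal{T}\times\mathcal{R}$. You do not get the limit bound $C^\infty\le C_{\rm iid}$. The statement should be read, as the paper reads it, with $C$ the exact ergodic capacity.
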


\begin{IEEEproof}
Consider the TX update step at iteration $k$. Since the current iterate $\mathbf{t}^{(k)}$ is included in the PSO swarm (as a warm-started particle), the PSO global best satisfies
\begin{align}
  C\bigl(\mathbf{t}^{(k+1)},\mathbf{r}^{(k)}\bigr)
  \geq C\bigl(\mathbf{t}^{(k)},\mathbf{r}^{(k)}\bigr).
  \label{eq:mono_tx}
\end{align}
By the same argument for the RX update:
\begin{align}
  C\bigl(\mathbf{t}^{(k+1)},\mathbf{r}^{(k+1)}\bigr)
  \geq C\bigl(\mathbf{t}^{(k+1)},\mathbf{r}^{(k)}\bigr).
  \label{eq:mono_rx}
\end{align}
Combining the results, we have $C^{(k+1)} \geq C^{(k)}$. The upper bound follows from Remark~\ref{rem:bounds}: $C(\mathbf{t},\mathbf{r}) \leq C_{\rm iid}$ for any $(\mathbf{t},\mathbf{r})$. Since $\{C^{(k)}\}$ is non-decreasing and bounded above, it converges by the monotone convergence theorem.
\end{IEEEproof}

\begin{remark}[Convergence to Local Optimum]\label{rem:local_opt}
Proposition~\ref{prop:convergence} guarantees convergence of the objective sequence, but not to a \emph{global} optimum. The AO framework converges to a coordinate-wise stationary point. To mitigate the risk of poor local optima, we employ (i)~diversified PSO swarm initialization and (ii)~warm-starting one particle from the current AO iterate. In practice, near-identical solutions are obtained across different initializations (Section~\ref{sec:sim}), suggesting a favorable landscape with few significant local optima.
\end{remark}

\subsubsection{Computational Complexity}
Let $I_{\rm AO}$ denote the number of AO outer iterations. The dominant cost per outer iteration consists of two PSO runs (one for TX, one for RX), each evaluating $Z$ particles over $I_{\rm PSO}$ iterations. Each fitness evaluation requires computing the correlation matrix $\mathbf{R}_T$ (or $\mathbf{R}_R$), performing a Cholesky factorization, and averaging $S$ capacity samples. Specifically, we have:
\begin{itemize}
\item \emph{Correlation matrix computation:}
$\mathbf{R}_T(\mathbf{t})$ requires $N(N-1)/2$ Bessel function evaluations
and an $N \times N$ Cholesky decomposition, costing $\mathcal{O}(N^3)$.
\item \emph{Per-sample capacity evaluation:}
For each Monte-Carlo sample $s$, forming $\mathbf{H}_s = \mathbf{R}_R^{1/2}\mathbf{G}_s\mathbf{R}_T^{1/2}$
costs $\mathcal{O}(MN^2 + M^2N)$, and computing
$\det(\mathbf{I}_M + \gamma\mathbf{H}_s\mathbf{H}_s^H)$ via
Cholesky factorization costs $\mathcal{O}(M^3)$.
\item \emph{Total:}
Over $I_{\rm PSO}$ PSO iterations, $Z$ particles, $S$ Monte-Carlo samples, and two AO half-steps per outer loop, the total complexity can be estimated as
\begin{align}
  \mathcal{O}\!\left(
    I_{\rm AO}\cdot I_{\rm PSO}\cdot Z\cdot S\cdot(M^3 + N^3 + MN\max(M,N))
  \right).
  \label{eq:complexity}
\end{align}
\end{itemize}
For the default parameters used in our simulations ($I_{\rm AO}=12$, $I_{\rm PSO}=60$, $Z=20$, $S=200$, $M=N=6$), the total operation count is approximately $6.2\times 10^9$ scalar multiplications, which completes in about $3$--$5$ minutes on a modern workstation with MATLAB.

\vspace{-2mm}
\section{SCA-Based Alternative}\label{sec:sca}
AO-PSO entails significant computational overhead from Monte-Carlo fitness evaluations. In this section, we develop an alternative SCA approach that exploits the high-SNR structure to construct a sequence of tractable convex subproblems.

\vspace{-2mm}
\subsection{High-SNR Equivalent Reformulation}
From Proposition~\ref{prop:highSNR} and the determinant maximization interpretation in~\eqref{eq:det_max}, at high SNR, \eqref{eq:P0} is equivalent to
\begin{subequations}\label{eq:P_det}
\begin{align}
  \max_{\mathbf{t},\,\mathbf{r}} f(\mathbf{t}) + f(\mathbf{r})~~\mathrm{s.t.}~~\mathbf{t} \in \mathcal{T},\;
    \mathbf{r} \in \mathcal{R},
\label{eq:P_det_con}
\end{align}
\end{subequations}
where $f(\mathbf{t}) \triangleq \log_2\!\det(\mathbf{R}_T(\mathbf{t}))$. The objective is separable in $\mathbf{t}$ and $\mathbf{r}$, so each can be optimized independently. However, $f(\mathbf{t})$ is \emph{non-convex} because the mapping $\mathbf{t} \mapsto \mathbf{R}_T(\mathbf{t})$ is nonlinear through the oscillatory $J_0(\cdot)$.

\vspace{-2mm}
\subsection{Closed-Form Gradient via Bessel Identities}
We derive the gradient of $f(\mathbf{t})$ with respect to each position $t_n$. By the chain rule for matrix functions,
\begin{align}
  \frac{\partial f}{\partial t_n}
  = \frac{1}{\ln 2}\,
    \tr\!\left(
      \mathbf{R}_T^{-1}\,
      \frac{\partial \mathbf{R}_T}{\partial t_n}
    \right).
  \label{eq:grad_chain}
\end{align}
Using the Bessel derivative identity $J_0'(x) = -J_1(x)$, the off-diagonal Jacobian entries can be found as
\begin{align}
  \frac{\partial [\mathbf{R}_T]_{nj}}{\partial t_n}
  = -\frac{2\pi}{\lambda}\,
     J_1\!\left(\frac{2\pi|t_n - t_j|}{\lambda}\right)
     \operatorname{sgn}(t_n - t_j),
  \label{eq:dRdt}
\end{align}
for $j \neq n$, with $\partial[\mathbf{R}_T]_{nn}/\partial t_n = 0$. The full gradient component is given by
\begin{align}
  \frac{\partial f}{\partial t_n}
  = -\frac{4\pi}{\lambda\ln 2}
    \sum_{\substack{j=1\\j\neq n}}^{N}
    [\mathbf{R}_T^{-1}]_{nj}\,
    J_1\!\left(\frac{2\pi|t_n\!-\!t_j|}{\lambda}\right)
    \operatorname{sgn}(t_n\!-\!t_j).
  \label{eq:grad_fn}
\end{align}
Computing $\nabla_{\mathbf{t}} f \in \R^N$ requires one $N \times N$ matrix inversion and $N(N-1)/2$ evaluations of $J_1(\cdot)$, costing $\mathcal{O}(N^3)$---with \emph{no Monte-Carlo sampling}.

\begin{remark}[Gradient Interpretation]
The term $[\mathbf{R}_T^{-1}]_{nj}$ captures the sensitivity of $\log\det(\mathbf{R}_T)$ to the $(n,j)$-th correlation entry. Since $J_1(x) > 0$ for small $x$, the gradient pushes closely spaced antennas apart to reduce correlation. At larger spacings, it fine-tunes positions to hit the zeros of $J_0$.
\end{remark}

\vspace{-2mm}
\subsection{Projected Gradient Ascent and SCA Framework}
SCA constructs a first-order surrogate at iterate $\mathbf{t}^{(k)}$:
\begin{align}
  \hat{f}(\mathbf{t};\mathbf{t}^{(k)})
  = f(\mathbf{t}^{(k)})
  + \nabla f(\mathbf{t}^{(k)})^T (\mathbf{t} - \mathbf{t}^{(k)}),
  \label{eq:surrogate}
\end{align}
which is affine (hence convex) in $\mathbf{t}$. Rather than solving the resulting linear programming on $\mathcal{T}$ (which yields extreme-point solutions and may oscillate), we use a \emph{proximal projected gradient ascent} step:
\begin{align}
  \mathbf{t}^{(k+1)}
  = \mathcal{P}_{\mathcal{T}}\!\left(
      \mathbf{t}^{(k)} + \eta\,\nabla f(\mathbf{t}^{(k)})
    \right),
  \label{eq:PGA}
\end{align}
where $\eta > 0$ is the step size and $\mathcal{P}_{\mathcal{T}}$ is the projection onto $\mathcal{T}$ (bound clipping + spacing enforcement, identical to~\eqref{eq:clip} and \eqref{eq:proj}). A backtracking line search on $\eta$ ensures $f(\mathbf{t}^{(k+1)}) \geq f(\mathbf{t}^{(k)})$. AO-SCA (Algorithm~\ref{alg:AO_SCA}) alternates between TX and RX updates, each solved by $I_{\rm SCA}$ projected gradient ascent steps on the log-determinant objective.

\begin{algorithm}[t]
\caption{AO-SCA for Joint FAS Position Optimization}\label{alg:AO_SCA}
{\small
\begin{algorithmic}[1]
\REQUIRE Apertures $A,B$; counts $N,M$; spacing $d_{\min}$;
         step size $\eta_0$; SCA iterations $I_{\rm SCA}$;
         AO iterations $I_{\rm AO}$; tolerance $\varepsilon$.
\ENSURE  Optimized positions $\mathbf{t}^*, \mathbf{r}^*$.
\STATE Initialize $\mathbf{t}^{(0)}, \mathbf{r}^{(0)}$ by uniform spacing.
\FOR{$k = 0, 1, \ldots, I_{\rm AO}-1$}
  \STATE \textbf{TX update:} Fix $\mathbf{r}^{(k)}$.
  \FOR{$\ell = 0, 1, \ldots, I_{\rm SCA}-1$}
    \STATE Compute $\mathbf{R}_T(\mathbf{t}^{(\ell)})$ and
           $\mathbf{R}_T^{-1}$ via Cholesky.
    \STATE Compute $\nabla_{\mathbf{t}} f$ via~\eqref{eq:grad_fn}.
    \STATE Backtracking: find $\eta$ s.t.\
           $f(\mathcal{P}_{\mathcal{T}}(\mathbf{t}^{(\ell)} + \eta\nabla f))
           > f(\mathbf{t}^{(\ell)})$.
    \STATE $\mathbf{t}^{(\ell+1)} \leftarrow
           \mathcal{P}_{\mathcal{T}}(\mathbf{t}^{(\ell)} + \eta\nabla f)$.
  \ENDFOR
  \STATE $\mathbf{t}^{(k+1)} \leftarrow \mathbf{t}^{(I_{\rm SCA})}$.
  \STATE \textbf{RX update:} Repeat above for $\mathbf{r}$ with
         $f(\mathbf{r}) = \log_2\det(\mathbf{R}_R(\mathbf{r}))$.
\ENDFOR
\end{algorithmic}}
\end{algorithm}

\vspace{-2mm}
\subsection{Convergence and Complexity Comparison}
\begin{proposition}[AO-SCA Convergence]\label{prop:sca_conv}
With backtracking line search, the AO-SCA algorithm generates a monotonically non-decreasing sequence $\{f(\mathbf{t}^{(k)}) + f(\mathbf{r}^{(k)})\}$ that converges to a coordinate-wise stationary point of~\eqref{eq:P_det}.
\end{proposition}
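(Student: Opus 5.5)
The argument follows the template of Proposition~\ref{prop:convergence}: first establish monotonicity, then boundedness, and finally obtain stationarity from the structure of the projected gradient update.

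\emph{Monotonicity.} Write $F(\mathbf{t},\mathbf{r}) = f(\mathbf{t}) + f(\mathbf{r})$. Because $F$ is separable, the TX update changes only $f(\mathbf{t})$ and the RX update changes only $f(\mathbf{r})$. Within one TX update, every inner step is accepted only if the backtracking test $f(\mathcal{P}_{\mathcal{T}}(\mathbf{t}^{(\ell)}+\eta\nabla f)) > f(\mathbf{t}^{(\ell)})$ holds. If no admissible $\eta$ exists, we set $\mathbf{t}^{(\ell+1)}=\mathbf{t}^{(\ell)}$. Either way, $f(\mathbf{t}^{(\ell+1)})\ge f(\mathbf{t}^{(\ell)})$, and chaining over $\ell$ gives $f(\mathbf{t}^{(k+1)})\ge f(\mathbf{t}^{(k)})$. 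The same argument applies to $\mathbf{r}$, so $F(\mathbf{t}^{(k+1)},\mathbf{r}^{(k+1)})\ge F(\mathbf{t}^{(k)},\mathbf{r}^{(k)})$.

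\emph{Boundedness and convergence of the values.} By Hadamard's inequality, as used in the proof of Corollary~\ref{cor:loss}, $f\le 0$ on the feasible set, so $F\le 0$. The initial uniform-spacing point has $\det\mathbf{R}_T>0$. Monotonicity then keeps every iterate inside the superlevel set $\{F\ge F^{(0)}\}$. On this set $\det\mathbf{R}_T$ is bounded away from zero, so $f$ is finite and continuously differentiable there: $J_0$ and $J_1$ are smooth, the off-diagonal entries are smooth in $\mathbf{t}$ on the ordered polytope $\mathcal{T}$ where $\operatorname{sgn}(t_n-t_j)$ is constant, and $\log\det$ is smooth on positive definite matrices. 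The value sequence is non-decreasing and bounded above, so it converges.

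\emph{Stationarity.} Since $\mathcal{T}$ is compact, the sequence $\{(\mathbf{t}^{(k)},\mathbf{r}^{(k)})\}$ has a limit point $(\bar{\mathbf{t}},\bar{\mathbf{r}})$. The superlevel set is closed, so $\det\mathbf{R}_T(\bar{\mathbf{t}})\ge e^{F^{(0)}\ln 2}>0$. On this compact set $\nabla f$ is Lipschitz with some constant $L$, because the second derivatives involve $J_0$, $J_1$, $J_1'$, and $\mathbf{R}_T^{-1}$, and all of these are bounded there. For a projection onto a convex set (on each ordering cell, $\mathcal{T}$ is a polytope and $\mathcal{P}_{\mathcal{T}}$ is treated as a Euclidean projection), the standard sufficient-ascent inequality holds: for $\eta\le 1/L$,
\begin{align}
f(\mathbf{t}^{+}) - f(\mathbf{t}) \ge \frac{1}{2\eta}\|\mathbf{t}^{+}-\mathbf{t}\|^2, \quad \mathbf{t}^{+}=\mathcal{P}_{\mathcal{T}}(\mathbf{t}+\eta\nabla f(\mathbf{t})).\nonumber
\end{align}
Consequently, backtracking terminates with $\eta$ bounded below by $\beta/L$. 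The increments are summable because $F$ converges, so $\|\mathbf{t}^{(\ell+1)}-\mathbf{t}^{(\ell)}\|\to 0$. Passing to the limit in the fixed-point relation $\bar{\mathbf{t}}=\mathcal{P}_{\mathcal{T}}(\bar{\mathbf{t}}+\bar\eta\nabla f(\bar{\mathbf{t}}))$, which uses continuity of the projection and of $\nabla f$, gives the first-order (KKT) condition $\nabla f(\bar{\mathbf{t}})^T(\mathbf{t}-\bar{\mathbf{t}})\le 0$ for all $\mathbf{t}\in\mathcal{T}$. The same holds for $\bar{\mathbf{r}}$. This is exactly coordinate-wise stationarity of~\eqref{eq:P_det}. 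Because $F$ is separable, block-wise stationarity actually coincides with full stationarity.

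\emph{Main obstacle.} The hardest step is the projection. The operator "clip then sort-and-push" in~\eqref{eq:clip}--\eqref{eq:proj} is not literally the Euclidean projection onto $\mathcal{T}$, and the feasible set is only convex within a fixed ordering. I would handle this in two ways. First, I would note that iterates never change ordering, since sorting preserves the cell and the polytope $\mathcal{T}$ fixes $t_1<\cdots<t_N$. Second, I would either assume $\mathcal{P}_{\mathcal{T}}$ is implemented as the exact projection onto this convex polytope, or fall back on the weaker claim that relies only on the acceptance test: with the weaker claim, the value sequence still converges, and stationarity is claimed at limit points where the backtracking step size stays bounded below.
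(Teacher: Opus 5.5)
Your monotonicity and boundedness steps (backtracking acceptance $\Rightarrow$ non-decreasing values; Hadamard $\Rightarrow f\le 0$; hence the values converge) are exactly the paper's proof. The paper stops there and never argues the stationarity part of the statement, so your third step is new material, not a reconstruction. One small correction there: take $L$ over all of $\mathcal{T}$, not over your superlevel set. The descent lemma is applied along the segment $[\mathbf{t},\mathbf{t}^{+}]$, and that segment need not stay in a (nonconvex) superlevel set. This is harmless: since $d_{\min}>0$, every $\mathbf{t}\in\mathcal{T}$ has distinct positions. The quadratic form $\mathbf{c}^H\mathbf{R}_T\mathbf{c}$ is the mean-square of a nonzero exponential sum in $\cos\theta$, which has only isolated zeros. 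So $\mathbf{R}_T\succ 0$ on the whole compact convex cell.

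The genuine gap is the step ``backtracking terminates with $\eta$ bounded below by $\beta/L$, so the increments are summable.'' In Algorithm~\ref{alg:AO_SCA} a trial step is accepted as soon as $f$ strictly increases. Your sufficient-ascent inequality is guaranteed only for $\eta\le 1/L$. It therefore shows that small steps would be accepted, not that the accepted step satisfies the inequality. Starting from $\eta_0>1/L$, the first trial can be accepted with a gain arbitrarily small compared with $\|\mathbf{t}^{+}-\mathbf{t}\|^2$. Convergence of $F$ then gives no control on $\|\mathbf{t}^{(\ell+1)}-\mathbf{t}^{(\ell)}\|$. Without that you cannot pass to the fixed-point relation, and nothing rules out non-stationary limit points. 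Your fallback does not help either, because a lower bound on $\eta$ says nothing about step lengths vanishing.

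Closing the argument needs two changes to the algorithm as stated.
\begin{itemize}
\item Replace the acceptance rule by an Armijo-type test, e.g.\ $f(\mathbf{t}^{+})\ge f(\mathbf{t})+\frac{\sigma}{\eta}\|\mathbf{t}^{+}-\mathbf{t}\|^2$ with $\sigma\in(0,1)$. This holds for all $\eta\le 2(1-\sigma)/L$, so backtracking stops at $\eta\ge\min(\eta_0,2\beta(1-\sigma)/L)$ and each accepted step gains at least $\frac{\sigma}{\eta_0}\|\mathbf{t}^{+}-\mathbf{t}\|^2$. A fixed step $\eta\le 1/L$ also works.
\item As you noted, make $\mathcal{P}_{\mathcal{T}}$ the exact Euclidean projection onto the ordered polytope. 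Both the sufficient-ascent bound and the KKT reading of the fixed point rest on its variational inequality, which the clip--sort--push map does not satisfy.
\end{itemize}
With both changes, your argument proves the stationarity claim that the paper states but does not prove.
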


\begin{IEEEproof}
Each projected gradient step with backtracking guarantees $f(\mathbf{t}^{(\ell+1)}) \geq f(\mathbf{t}^{(\ell)})$. Over the $I_{\rm SCA}$ inner iterations, $f(\mathbf{t}^{(k+1)}) \geq f(\mathbf{t}^{(k)})$. The same holds for the RX update. Since $f(\mathbf{t}) \leq 0$ for any $\mathbf{t}$ (Hadamard's inequality), the sequence is bounded above and hence converges.
\end{IEEEproof}

The per-iteration complexity of AO-SCA is estimated as
\begin{align}
  \mathcal{O}(I_{\rm SCA} \cdot N^3),
  \label{eq:sca_complexity}
\end{align}
which is \emph{independent of the Monte-Carlo sample size $S$}. In contrast, AO-PSO costs $\mathcal{O}(I_{\rm PSO} \cdot Z \cdot S \cdot N^3)$ per AO step. For typical parameters ($I_{\rm PSO} = 60$, $Z = 20$, $S = 200$), AO-SCA is approximately $2.4 \times 10^5$ times faster per AO iteration.

\begin{remark}[SCA vs.\ PSO Trade-off]\label{rem:sca_vs_pso}
AO-SCA is a \emph{local} optimizer: it converges to the nearest stationary point from the given initialization. AO-PSO is a \emph{global} optimizer: the stochastic swarm explores the search space broadly and is less sensitive to initialization. Moreover, AO-SCA optimizes the \emph{high-SNR surrogate} $\log_2\det(\mathbf{R}_T)$ rather than the exact ergodic capacity, so its solution may be suboptimal at moderate SNR. In practice, as will be shown in Section~\ref{sec:sim}, AO-SCA achieves near-identical capacity to AO-PSO at high SNR while being orders of magnitude faster, but AO-PSO is more robust at moderate SNR and with poor initialization.
\end{remark}

\vspace{-2mm}
\section{Simulation Results}\label{sec:sim}
Unless otherwise stated, the default parameters are listed in Table~\ref{tab:params}. The channel follows~\eqref{eq:H_Kronecker} with $\mathbf{G} \sim \mathcal{CN}(\mathbf{0},\mathbf{I})$.

\begin{table}[t]
\centering
\caption{Default Simulation Parameters}\label{tab:params}
\resizebox{.85\columnwidth}{!}{
\begin{tabular}{lll}
\hline
\textbf{Parameter} & \textbf{Symbol} & \textbf{Value} \\
\hline
Number of TX/RX antennas & $N, M$ & $6$ \\
TX/RX aperture & $A\lambda, B\lambda$ & $2\lambda$ \\
Minimum spacing & $d_{\min}$ & $0.3\lambda$ \\
Noise power & $\sigma^2$ & $1$ \\
PSO swarm size & $Z$ & $20$ \\
PSO iterations & $I_{\rm PSO}$ & $60$ \\
Inertia weight & $w_{\max}, w_{\min}$ & $0.9, 0.4$ \\
Learning factors & $c_1, c_2$ & $1.5, 1.5$ \\
Monte-Carlo samples (optimization) & $S$ & $200$ \\
Monte-Carlo samples (evaluation) & $S_{\rm eval}$ & $1500$ \\
AO iterations & $I_{\rm AO}$ & $12$ \\
AO convergence threshold & $\varepsilon$ & $10^{-3}$ \\
\hline
\end{tabular}}
\end{table}

\vspace{-2mm}
\subsection{Convergence Behavior}

\begin{figure}[t]
\centering
\includegraphics[width=.85\columnwidth]{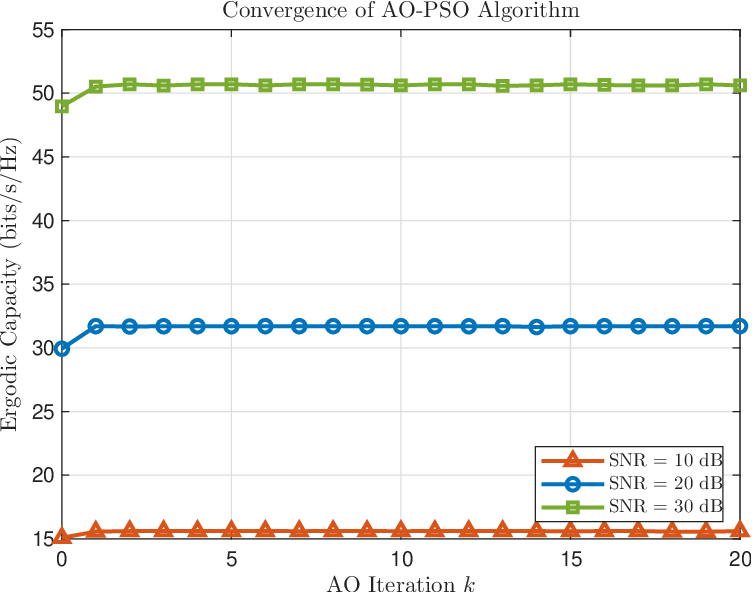}
\caption{Convergence of the proposed AO-PSO algorithm at three SNR levels ($10$, $20$, and $30$~dB). The $x$-axis starts at $k=0$, which represents the ergodic capacity at the initial uniform-spacing positions before any optimization. With weakened PSO parameters ($Z=10$ particles, $I_{\rm PSO}=20$ iterations per AO step), the algorithm exhibits gradual improvement over approximately $5$--$8$ outer AO iterations before stabilizing, confirming the monotonic convergence guaranteed by Proposition~\ref{prop:highSNR}. Moreover, higher SNR yields a larger absolute gain from optimization because the capacity is more sensitive to the eigenvalue distribution of $\mathbf{R}_T$ and $\mathbf{R}_R$ at high SNR.}\label{fig:convergence}
\vspace{-3mm}
\end{figure}

Fig.~\ref{fig:convergence} shows the convergence performance with a reduced PSO budget ($Z = 10$, $I_{\rm PSO} = 20$). As we can see, the capacity is monotonically non-decreasing (Proposition~\ref{prop:convergence}), with the first $3$--$5$ iterations capturing most of the gain. The gain widens with SNR ($\sim$2~bps/Hz at 10~dB, $>$7~bps/Hz at 30~dB), consistent with the result predicted by Propositions~\ref{prop:highSNR} and~\ref{prop:lowSNR}. With default parameters, convergence is achieved in $3$--$5$ iterations.

\vspace{-2mm}
\subsection{Ergodic Capacity vs.\ SNR}

\begin{figure}[t]
\centering
\includegraphics[width=.85\columnwidth]{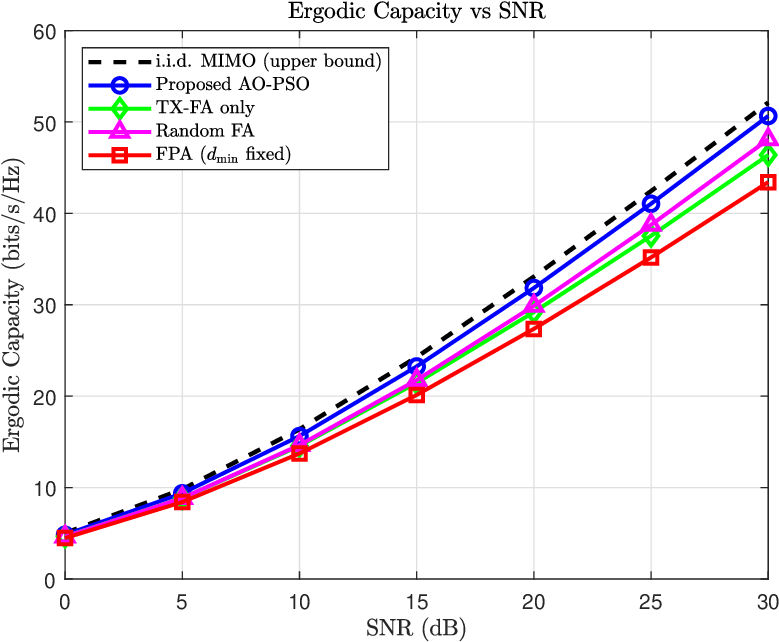}
\caption{Ergodic capacity versus SNR for five schemes: the i.i.d.\ MIMO upper bound ($\mathbf{R}=\mathbf{I}$), the proposed AO-PSO with joint TX+RX FAS optimization, TX-FAS only (RX fixed at $d_{\min}$ spacing), random FAS (best of $50$ random feasible placements), and FPA (both TX and RX fixed at $d_{\min}$ spacing). The proposed AO-PSO consistently outperforms all other practical schemes, with a widening performance gap at higher SNR.}\label{fig:snr}
\vspace{-3mm}
\end{figure}

Fig.~\ref{fig:snr} compares the ergodic capacity of five schemes across SNR $\in[0, 30]$~dB: (i)~\emph{the i.i.d.~MIMO upper bound}: $\mathbf{R}_T=\mathbf{R}_R=\mathbf{I}$, representing the ideal case of fully independent channels; (ii)~\emph{AO-PSO}: both TX and RX positions jointly optimized (proposed); (iii)~\emph{TX-FAS only}: only the TX antennas are optimized via PSO, with the RX fixed at uniform $d_{\min}$ spacing; (iv)~\emph{random FAS}: the best placement among $50$ random feasible configurations for both TX and RX; (v)~\emph{FPA}: both TX and RX fixed at uniform $d_{\min}$ spacing.

The results in this figure indicate a consistent performance hierarchy: $C_{\rm iid} > C_{\rm AO\text{-}PSO} > C_{\rm TX\text{-}FAS} \geq C_{\rm Random} > C_{\rm FPA}$.

\emph{(i)~AO-PSO vs.\ i.i.d.\ upper bound:}
The proposed AO-PSO method closely tracks the i.i.d.\ MIMO bound, with a gap of only $\sim$0.5~bps/Hz at SNR~$= 30$~dB. This residual gap corresponds to the irreducible correlation loss $\Delta C_{\rm AO\text{-}PSO} \approx 1.5$~bps/Hz (Corollary~\ref{cor:loss}), since full decorrelation is unattainable for $N = 6$ within a $2\lambda$ aperture (Remark~\ref{rem:beyond_N2}).

\emph{(ii)~AO-PSO vs.\ FPA:}
AO-PSO outperforms FPA by $\sim$2~bps/Hz at 10~dB and 7+~bps/Hz at 30~dB. The widening gap is predicted by Corollary~\ref{cor:loss}: $\Delta C$ is SNR-independent, so the FPA penalty becomes increasingly dominant at high SNR. At low SNR, all schemes converge as predicted by Proposition~\ref{prop:lowSNR}.

\emph{(iii)~TX-FAS only:}
Optimizing only the TX positions captures roughly half the gain between FPA and AO-PSO, consistent with the multiplicative structure in~\eqref{eq:DeltaC}: optimizing one side eliminates $-\log_2\det(\mathbf{R}_T)$ but retains $-\log_2\det(\mathbf{R}_R)$.

\emph{(iv)~Random FAS:}
The best of $50$ random configurations performs comparably to TX-FAS only, showing that unstructured search cannot match systematic optimization.

\vspace{-2mm}
\subsection{Ergodic Capacity vs.\ Aperture Size}

\begin{figure}[t]
\centering
\includegraphics[width=.85\columnwidth]{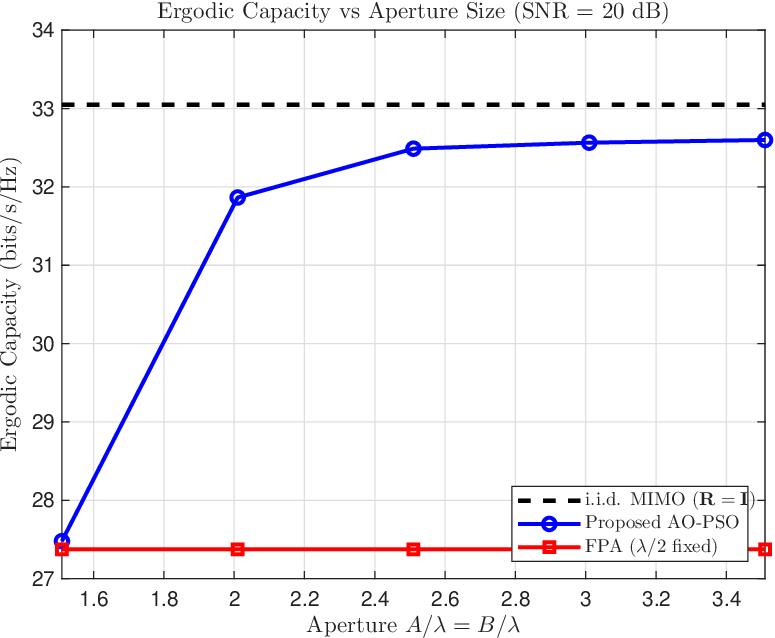}
\caption{Ergodic capacity versus normalized aperture size $A/\lambda=B/\lambda$ at SNR $=20$~dB. The i.i.d.\ MIMO upper bound (dashed) is the capacity achieved when $\mathbf{R}_T=\mathbf{R}_R=\mathbf{I}$. As the aperture grows, both AO-PSO and FPA improve, but AO-PSO achieves the upper bound more rapidly because it can spread the antennas to minimize spatial correlation.}\label{fig:aperture}
\vspace{-3mm}
\end{figure}

Fig.~\ref{fig:aperture} shows the capacity at SNR $= 20$~dB as $A/\lambda = B/\lambda$ increases from $1.5$ to $3.5$.

\emph{(i)~Small aperture ($A/\lambda \leq 1.5$):}
At the minimum feasible aperture, all antennas are packed at $d_{\min}$ with no room for optimization; AO-PSO and FPA perform identically.

\emph{(ii)~Moderate aperture ($1.5 < A/\lambda \leq 3$):}
AO-PSO rapidly approaches $C_{\rm iid}$ (within $1$--$2$~bps/Hz at $A/\lambda = 2$), while FPA remains at a constant low capacity since it does not exploit the available space. The gap exceeds $5$~bps/Hz at $A/\lambda = 2$.

\emph{(iii)~Large aperture ($A/\lambda > 3$):}
The capacity of the AO-PSO method saturates near $C_{\rm iid}$, as predicted by Proposition~\ref{prop:aperture}. Increasing the aperture beyond $3\lambda$ yields diminishing returns for $N = 6$, providing a practical design guideline.

\vspace{-2mm}
\subsection{Optimized Antenna Positions}

\begin{figure}[t]
\centering
\includegraphics[width=.85\columnwidth]{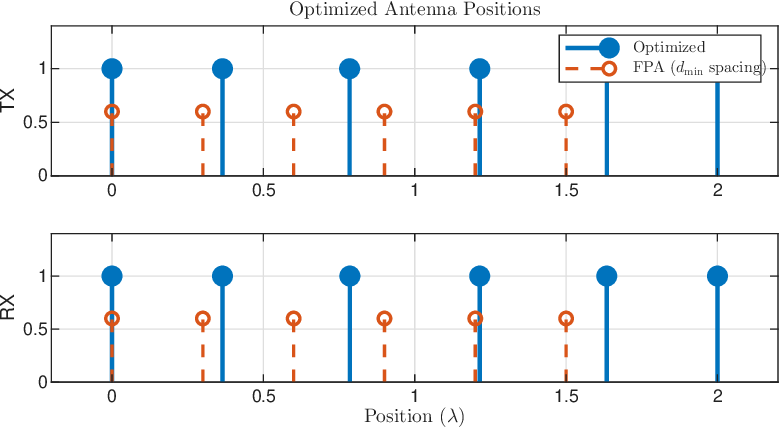}
\caption{Optimized antenna positions (blue filled markers) versus uniform $d_{\min}=0.3\lambda$ spacing (orange dashed markers) at SNR $=20$~dB, $A/\lambda=B/\lambda=2$. The AO-PSO algorithm spreads antennas across the full aperture to maximize inter-element spacing and reduce spatial correlation.}\label{fig:positions}
\vspace{-3mm}
\end{figure}

Fig.~\ref{fig:positions} visualizes the optimized TX and RX antenna positions at SNR $= 20$~dB. Compared to FPA (occupying only $1.5\lambda$ of the $2\lambda$ aperture), the optimized positions spread across the full aperture with inter-element spacing of roughly $0.4\lambda$.

\emph{(i)~Near-uniform spreading:}
The optimized spacings are approximately $|t_{i+1}^* - t_i^*| \approx A\lambda/(N-1) = 0.4\lambda$, close to the first Bessel zero $d^* \approx 0.383\lambda$ (Proposition~\ref{prop:N2}), confirming that the algorithm minimizes the dominant pairwise correlations.

\emph{(ii)~TX--RX symmetry:}
The TX and RX positions are nearly identical ($\mathbf{t}^* \approx \mathbf{r}^*$), as expected from Remark~\ref{rem:decoupled}.

\emph{(iii)~Correlation matrix quality:}
For FPA, $\det(\mathbf{R}_T) \approx 0.015$ and $\kappa > 100$ (severe ill-conditioning). After optimization, $\det(\mathbf{R}_T) \approx 0.59$ and $\kappa < 5$, representing a dramatic improvement in spatial DoF that directly translates to the capacity gains in Figs.~\ref{fig:snr} and~\ref{fig:analytical} via Corollary~\ref{cor:loss}.

\vspace{-2mm}
\subsection{Verification of High-SNR Approximation ($N\!=\!M\!=\!2$)}

\begin{figure}[t]
\centering
\includegraphics[width=.85\columnwidth]{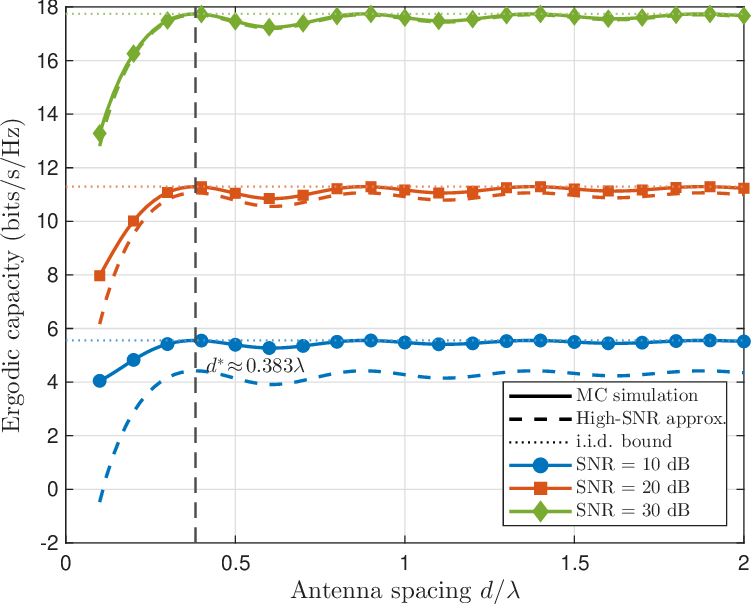}
\caption{Ergodic capacity versus antenna spacing $d/\lambda$ for $N=M=2$ at SNR $\in\{10, 20, 30\}$~dB. Solid lines with markers: Monte-Carlo simulation ($S=3000$). Dashed lines: high-SNR approximation from Proposition~\ref{prop:highSNR}. Dotted horizontal lines: i.i.d.\ MIMO upper bound. The vertical dashed line marks the optimal spacing $d^*\!\approx\!0.383\lambda$ from Proposition~\ref{prop:N2}.}\label{fig:spacing}
\vspace{-3mm}
\end{figure}

Fig.~\ref{fig:spacing} verifies Propositions~\ref{prop:highSNR} and~\ref{prop:N2} for $N\!=\!M\!=\!2$.

\emph{(i)~High-SNR approximation accuracy:}
At SNR~$= 30$~dB, the analytical curve from~\eqref{eq:C_highSNR} closely matches the Monte-Carlo result, confirming tightness. At SNR$=10$~dB, a noticeable gap appears since $\gamma \gg 1$ is less accurate.

\emph{(ii)~Optimal spacing:}
The capacity peaks sharply at $d^* \approx 0.383\lambda$, where $\mathbf{R}_T = \mathbf{R}_R = \mathbf{I}_2$ (zero correlation) and the capacity coincides with the i.i.d.\ bound (Proposition~\ref{prop:N2}).

\emph{(iii)~Oscillatory behavior:}
Beyond $d^*$, the capacity oscillates due to $J_0(\cdot)$, reaching secondary peaks at subsequent zeros. This landscape underscores the non-convexity of \eqref{eq:P0}.

\vspace{-2mm}
\subsection{Verification of Capacity Loss Characterization ($N\!=\!M\!=\!6$)}

\begin{figure}[t]
\centering
\includegraphics[width=.85\columnwidth]{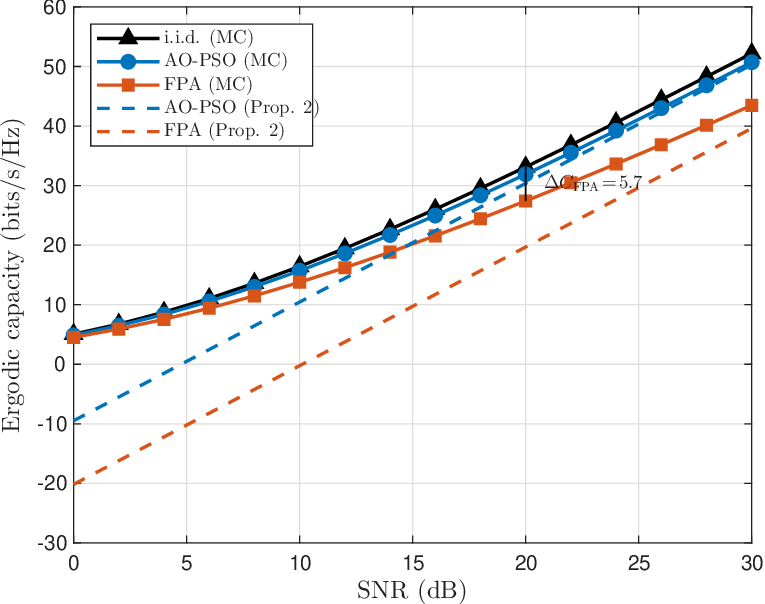}
\caption{Ergodic capacity versus SNR for $N=M=6$, $A/\lambda=B/\lambda=2$. Solid lines with markers: Monte-Carlo simulation. Dashed lines: high-SNR approximation~\eqref{eq:C_highSNR}. The AO-PSO optimized scheme achieves $\det(\mathbf{R}_T)\!\approx\!0.59$, reducing the correlation-induced loss to $\Delta C\!\approx\!1.5$~bps/Hz, whereas FPA has $\det(\mathbf{R}_T)\!\approx\!0.015$ with $\Delta C\!\approx\!12.2$~bps/Hz.}\label{fig:analytical}
\vspace{-3mm}
\end{figure}

Fig.~\ref{fig:analytical} validates Corollary~\ref{cor:loss} for the $N\!=\!M\!=\!6$ setting. For AO-PSO ($\det(\mathbf{R}_T) \approx 0.59$), the high-SNR approximation becomes tight above SNR~$\approx 15$~dB, confirming that the well-conditioned correlation matrices reach the high-SNR
regime at moderate SNR. For FPA ($\det(\mathbf{R}_T) \approx 0.015$), the ill-conditioned matrices require SNR~$> 25$~dB for convergence.

Corollary~\ref{cor:loss} predicts that $\Delta C_{\rm FPA} \approx 12.2$~bps/Hz and $\Delta C_{\rm AO\text{-}PSO} \approx 1.5$~bps/Hz. At SNR~$= 30$~dB, the Monte-Carlo results illustrate that $C_{\rm iid} - C_{\rm FPA} \approx 7.2$~bps/Hz which is smaller than the asymptotic prediction due to finite-SNR corrections, while for the AO-PSO method, the gap closely approaches the predicted $1.5$~bps/Hz, confirming that the algorithm effectively maximizes $\det(\mathbf{R}_T)\det(\mathbf{R}_R)$.

\vspace{-2mm}
\subsection{Validation of Low-SNR and High-SNR Approximations}

\begin{figure}[t]
\centering
\includegraphics[width=.85\columnwidth]{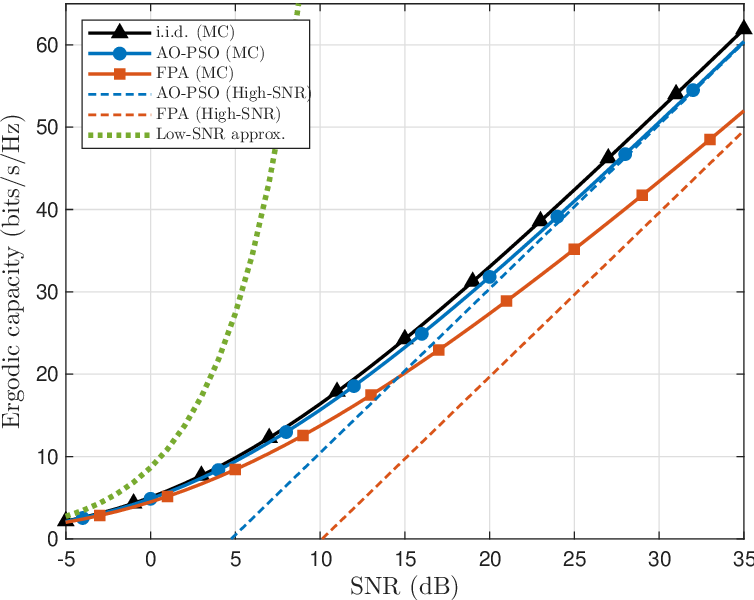}
\caption{Ergodic capacity versus SNR for $N=M=6$, $A/\lambda=B/\lambda=2$. Solid lines with markers: Monte-Carlo simulation. Dashed lines: high-SNR approximation~\eqref{eq:C_highSNR} (Proposition~\ref{prop:highSNR}). Dotted green line: low-SNR approximation~\eqref{eq:C_lowSNR} (Proposition~\ref{prop:lowSNR}). The low-SNR curve is universal across all schemes, while the high-SNR curves separate according to $\det(\mathbf{R}_T)\det(\mathbf{R}_R)$.}\label{fig:lowhigh}
\vspace{-3mm}
\end{figure}

Fig.~\ref{fig:lowhigh} provides a comprehensive validation of both approximations across the full SNR range from $-5$ to $35$~dB.

\emph{(i)~Low-SNR regime ($\text{SNR} < 5$~dB):}
The low-SNR curve $C \approx NM\gamma/\ln 2$ from Proposition~\ref{prop:lowSNR} tightly matches all three Monte-Carlo curves, confirming that the capacity is insensitive to the antenna positions. This regime is power-limited, and FAS optimization provides no noticeable benefit.

\emph{(ii)~High-SNR regime ($\text{SNR} > 15$~dB):}
The high-SNR approximation~\eqref{eq:C_highSNR} converges to the Monte-Carlo results. For AO-PSO ($\det(\mathbf{R}_T) \approx 0.59$), the approximation becomes accurate above $\sim$15~dB. For FPA ($\det(\mathbf{R}_T) \approx 0.015$), the ill-conditioned matrices require SNR~$> 25$~dB for convergence.

\emph{(iii)~Transition region ($5$--$15$~dB):}
Neither approximation is tight; the curves begin to separate as the system transitions from power-limited to DoF-limited operation.

\vspace{-2mm}
\subsection{Multi-Antenna Scaling ($N = M \in \{2,\ldots,8\}$)}

\begin{figure}[t]
\centering
\includegraphics[width=.65\columnwidth]{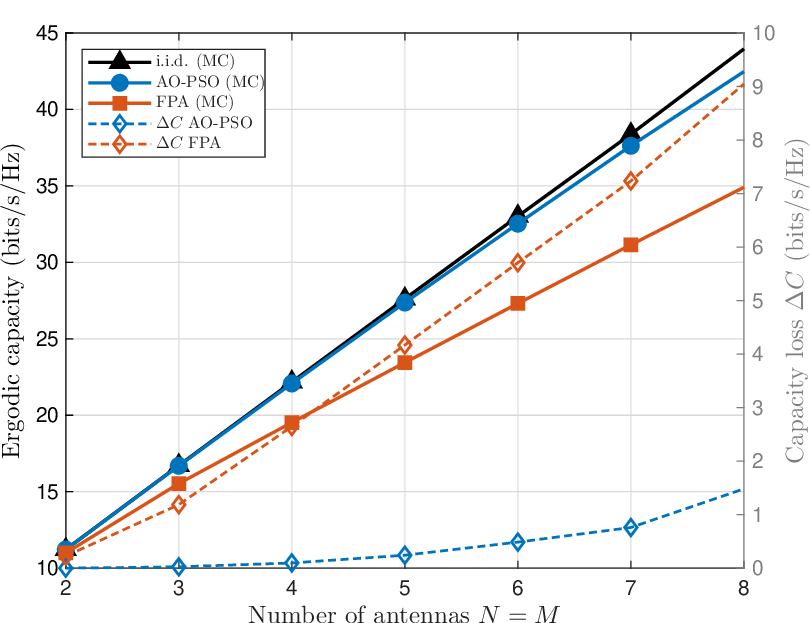}
\caption{Ergodic capacity (left axis, solid lines) and capacity loss $\Delta C = C_{\rm iid} - C$ (right axis, dashed lines) versus the number of fluid antennas $N = M$ at SNR $= 20$~dB with aperture $A/\lambda = B/\lambda = 3$. The AO-PSO scheme maintains near-i.i.d.\ MIMO capacity across all $N$, while the FPA capacity loss grows rapidly with $N$.}\label{fig:scaling}
\vspace{-3mm}
\end{figure}

Fig.~\ref{fig:scaling} investigates the scalability for $N = M \in \{2,\ldots,8\}$ with $A/\lambda = 3$ and SNR~$= 20$~dB.

\emph{(i)~Linear capacity growth:}
All schemes exhibit approximately linear growth with $N$ (from the $N\log_2\gamma$ term in~\eqref{eq:C_highSNR}), but the AO-PSO curve closely tracks $C_{\rm iid}$ while FPA grows more slowly. At $N = 8$, AO-PSO achieves $42.5$~bps/Hz versus $34.9$~bps/Hz for FPA---a gap of $7.6$~bps/Hz.

\emph{(ii)~Capacity loss scaling:}
The FPA loss $\Delta C_{\rm FPA}$ grows rapidly: from $0.2$~bps/Hz ($N = 2$) to $9.1$~bps/Hz ($N = 8$), driven by the exponential decay of $\det(\mathbf{R}_T)$ ($\approx 10^{-4}$ at $N = 8$). In contrast, AO-PSO maintains $\Delta C < 1.5$~bps/Hz and $\det(\mathbf{R}_T) > 0.5$ even at $N = 8$. For $N = 2$, perfect decorrelation is achieved (Proposition~\ref{prop:N2}).

\emph{(iii)~Practical implication:}
FAS optimization is most critical for large $N$, where the FPA loss becomes prohibitive. A rule of thumb: $A/\lambda \geq 0.5N$ for near-optimal performance.

\vspace{-2mm}
\subsection{Convergence Comparison: AO-PSO vs.\ AO-SCA}

\begin{figure}[t]
\centering
\includegraphics[width=.65\columnwidth]{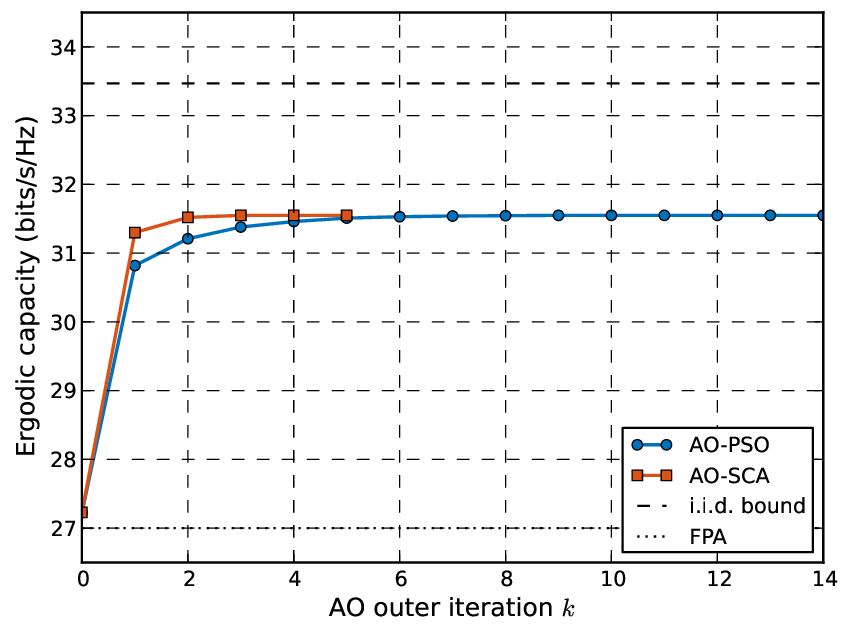}
\caption{Convergence comparison of AO-PSO ($Z\!=\!15$, $I_{\rm PSO}\!=\!40$) and AO-SCA ($I_{\rm SCA}\!=\!50$, $\eta_0\!=\!0.02\lambda$) at SNR~$=20$~dB with $N\!=\!M\!=\!6$. All capacity values are re-evaluated with $S\!=\!1000$ fixed Monte-Carlo samples. Horizontal lines: i.i.d.~MIMO upper bound (dashed) and FPA baseline (dotted).}\label{fig:conv_compare}
\vspace{-3mm}
\end{figure}

Fig.~\ref{fig:conv_compare} compares the convergence trajectories of the AO-PSO and AO-SCA methods at SNR~$= 20$~dB. All capacity values are evaluated with a fixed set of $S = 1000$ Monte-Carlo samples to ensure a fair and smooth comparison.

\emph{(i)~AO-SCA convergence speed:}
AO-SCA converges in only $2$--$3$ AO outer iterations, with the majority of the gain in the first iteration. This is enabled by the closed-form gradient~\eqref{eq:grad_fn}, which provides an exact ascent direction without noise.

\emph{(ii)~AO-PSO convergence behavior:}
AO-PSO converges over $5$--$8$ iterations. With default parameters ($Z = 20$, $I_{\rm PSO} = 60$), convergence
is achieved in $3$--$5$ iterations.

\emph{(iii)~Identical converged solutions:}
Both algorithms converge to $\det(\mathbf{R}_T) \approx 0.587$, confirming that the landscape has a single dominant basin of attraction. This validates Remark~\ref{rem:local_opt}: despite being a local optimizer, AO-SCA finds the same solution as the global PSO search.

\vspace{-2mm}
\subsection{Capacity Comparison: AO-PSO vs.\ AO-SCA vs.\ FPA}

\begin{figure}[t]
\centering
\includegraphics[width=.65\columnwidth]{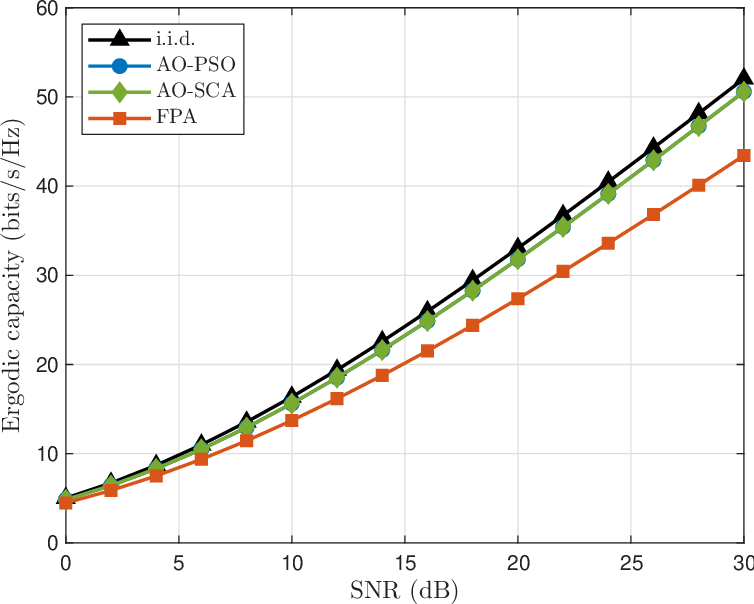}
\caption{Ergodic capacity versus SNR for $N=M=6$, $A/\lambda=B/\lambda=2$. Comparison of AO-PSO, AO-SCA (optimizing $\log_2\det(\mathbf{R}_T)$), FPA, and i.i.d.~MIMO upper bound. The AO-SCA and AO-PSO curves are virtually indistinguishable across the entire SNR range.}\label{fig:pso_vs_sca}
\vspace{-3mm}
\end{figure}

Fig.~\ref{fig:pso_vs_sca} compares the ergodic capacity of AO-PSO, AO-SCA, and FPA across SNR $\in [0, 30]$~dB.

\emph{(i)~AO-PSO vs.\ AO-SCA:}
Both are virtually indistinguishable ($<0.1$~bps/Hz gap), showing that the high-SNR surrogate $\log_2\det(\mathbf{R}_T)$ is a good proxy for ergodic capacity even at moderate SNR. Since the optimal positions depend on the SNR-independent correlation structure, positions optimized at the high-SNR limit remain near-optimal at all SNR levels.

\emph{(ii)~Computational advantage:}
AO-SCA completes in $<0.1$~s on a modern workstation versus $3$--$5$~min for AO-PSO, while achieving the same capacity.

\emph{(iii)~AO-PSO \& AO-SCA vs.\ FPA:}
Both outperform FPA by $>$7~bps/Hz at SNR$=30$~dB, consistent with Corollary~\ref{cor:loss}. The gap widens with SNR, confirming that position optimization is most beneficial in the high-SNR regime.

\emph{(iv)~Practical recommendation:}
Given the near-identical capacity and vastly superior computational efficiency, AO-SCA is recommended as the primary optimization method. AO-PSO serves as a complementary tool for scenarios with poor initialization or many local optima.

\vspace{-2mm}
\subsection{Summary of Performance and Complexity}

Table~\ref{tab:comparison} consolidates the key performance and complexity metrics for all schemes under the default setting ($N\!=\!M\!=\!6$, $A/\lambda\!=\!B/\lambda\!=\!2$, SNR~$=30$~dB). The proposed AO-PSO and AO-SCA both achieve $\det(\mathbf{R}_T)\approx 0.59$, reducing the capacity gap to the i.i.d.~MIMO bound to merely $\sim$0.5~bps/Hz---an order-of-magnitude improvement over FPA ($\sim$7.2~bps/Hz gap). The AO-SCA variant delivers the same decorrelation quality with a runtime speedup exceeding $10^5$ times.

\begin{table}[t]
\centering
\caption{Performance and Complexity Comparison with $N=M=6$, $A/\lambda=B/\lambda=2$, ${\rm SNR}=30~{\rm dB}$}\label{tab:comparison}
\resizebox{.65\columnwidth}{!}{
\begin{tabular}{lcccc}
\hline
\textbf{Scheme} & $\det(\mathbf{R}_T)$ & \textbf{Gap to i.i.d.} & \textbf{AO iter.} & \textbf{Runtime}\\
 & & (bps/Hz) & & \\
\hline
i.i.d.\ bound   & 1.000 & 0       & ---   & ---       \\
AO-PSO (proposed)          & 0.587 & $\sim$0.5   & 5--8  & 3--5 min  \\
AO-SCA (proposed)          & 0.587 & $\sim$0.5   & 2--3  & $<$0.1 s  \\
TX-FAS only      & 0.587$^\dagger$ & $\sim$3.9   & 5--8  & 1--2 min  \\
Random FAS       & varies & $\sim$4.0   & ---   & negligible\\
FPA              & 0.015 & $\sim$7.2   & ---   & ---       \\
\hline
\end{tabular}}
\begin{tablenotes}
\item {\scriptsize $^\dagger$TX side only; RX remains at $\det(\mathbf{R}_R)\!=\!0.015$ (FPA).}
\end{tablenotes}
\vspace{-3mm}
\end{table}
 
\section{Conclusion}\label{sec:conclusion}
This paper studied spatial-correlation optimization for fluid MIMO systems, where both the TX and RX adopt multiple fluid antennas over a linear aperture. Under Jakes' model and Kronecker channel decomposition, we derived the position-dependent correlation matrices and formulated an ergodic-capacity maximization problem. Our analysis yielded a high-SNR approximation that makes the role of $\det(\mathbf{R}_T)\det(\mathbf{R}_R)$ explicit, a closed-form capacity-loss bound, and a globally optimal spacing rule for the two-antenna case. For the general case, we developed two solvers: AO-PSO for global exploration and AO-SCA, which leverages the gradient via $J_0'(x)=-J_1(x)$ to solve a sequence of convex surrogates. Simulation results showed sizable gains over FPA arrays and optimized fluid MIMO achieves near-i.i.d.~capacity performance.


\begin{thebibliography}{99}
\bibitem{Paulraj-1994}
A. J Paulraj, Thomas Kailath, ``Increasing capacity in wireless broadcast systems using distributed transmission/directional reception (DTDR),'' US Patent 5,345,599, 1994.
\bibitem{Foschini-1996}
G. J. Foschini, ``Layered space-time architecture for wireless communication in a fading environment when using multi-element antennas,'' {\em Bell Labs Tech. J.}, vol. 1, no. 2, pp. 41--59, Autumn 1996.
\bibitem{Telatar-1999}
E. Telatar, ``Capacity of multi-antenna Gaussian channels,'' {\em Euro. Trans. Telecommun.}, vol. 10, no. 6, pp. 585--595, 1999.
\bibitem{Tarokh-1998}
V. Tarokh, N. Seshadri and A. R. Calderbank, ``Space-time codes for high data rate wireless communication: Performance criterion and code construction,'' {\em IEEE Trans. Inf. Theory}, vol. 44, no. 2, pp. 744--765, Mar. 1998.
\bibitem{Tarokh-1999}
V. Tarokh, H. Jafarkhani and A. R. Calderbank, ``Space-time block codes from orthogonal designs,'' {\em IEEE Trans. Inf. Theory}, vol. 45, no. 5, pp. 1456--1467, Jul. 1999.
\bibitem{Raleigh-1998}
G. G. Raleigh and J. M. Cioffi, ``Spatio-temporal coding for wireless communication,'' {\em IEEE Trans. Commun.}, vol. 46, no. 3, pp. 357--366, Mar. 1998.

\bibitem{wong2000opt}
K. K. Wong, R. D. Murch, R. S.-K. Cheng and K. B. Letaief, ``Optimizing the spectral efficiency of multiuser MIMO smart antenna systems,'' in \emph{Proc. IEEE Wirel. Commun. Netw. Conf. (WCNC)}, vol. 1, pp. 426--430, 23-28 Sept. 2000, Chicago, IL, USA.
\bibitem{wong2002per}
K. K. Wong, R. D. Murch and K. B. Letaief, ``Performance enhancement of multiuser MIMO wireless communication systems,"  \emph{IEEE Trans. Commun.}, vol. 50, no. 12, pp. 1960--1970, Dec. 2002.
\bibitem{wong2003jcd}
K. K. Wong, R. D. Murch, and K. Ben Letaief, ``A joint-channel diagonalization for multiuser MIMO antenna systems,'' {\em IEEE Trans. Wireless Commun.}, vol. 4, no. 2, pp. 773--786, Jul. 2003.
\bibitem{Vishwanath-2003}
S. Vishwanath, N. Jindal and A. Goldsmith, ``Duality, achievable rates, and sum-rate capacity of Gaussian MIMO broadcast channels,'' {\em IEEE Trans. Inf. Theory}, vol. 49, no. 10, pp. 2658--2668, Oct. 2003.
\bibitem{Choi-2004}
Lai-U Choi and R. D. Murch, ``A transmit preprocessing technique for multiuser MIMO systems using a decomposition approach,'' {\em IEEE Trans. Wireless Commun.}, vol. 3, no. 1, pp. 20--24, Jan. 2004.
\bibitem{Spencer-2004}
Q. H. Spencer, A. L. Swindlehurst and M. Haardt, ``Zero-forcing methods for downlink spatial multiplexing in multiuser MIMO channels,'' {\em IEEE Trans. Sig. Process.}, vol. 52, no. 2, pp. 461--471, Feb. 2004.
\bibitem{Villalonga2022spectral}
D.~A.~U. Villalonga, H.~OdetAlla, M.~J. Fern\'{a}ndez-Getino Garc\'{i}a, and A.~Flizikowski, ``Spectral efficiency of precoded 5G-NR in single and multi-user scenarios under imperfect channel knowledge: A comprehensive guide for implementation,'' {\em Electronics}, vol. 11, no. 24, p. 4237, Dec. 2022.
\bibitem{10379539}
Z.~Wang {\em et al.}, ``A tutorial on extremely large-scale {MIMO} for {6G}: Fundamentals, signal processing, and applications,'' {\em IEEE Commun. Surv. \& Tut.}, vol. 26, no. 3, pp. 1560--1605, thirdquarter 2024.
\bibitem{Hung-2014}
Y.-C. Hung and S.-H. L. Tsai, ``PAPR analysis and mitigation algorithms for beamforming MIMO OFDM systems,'' {\em IEEE Trans. Wireless Commun.}, vol. 13, no. 5, pp. 2588--2600, May 2014.

\bibitem{Tariq-2020}
F. Tariq {\em et al.}, ``A speculative study on 6G,'' {\em IEEE Wireless Commun.}, vol. 27, no. 4, pp. 118--125, Aug. 2020.
\bibitem{Bernhard-2007}
J. T. Bernhard, {\em Reconfigurable antennas}, Synthesis Lectures on Antennas, 2007.

\bibitem{wang2021fluid}
K. K. Wong, A. Shojaeifard, K.-F. Tong and Y. Zhang, ``Performance limits of fluid antenna systems," \emph{IEEE Commun. Lett.}, vol. 24, no. 11, pp. 2469--2472, Nov. 2020.
\bibitem{wong2020perf}
K. K. Wong, A. Shojaeifard, K.-F. Tong and Y. Zhang, ``Fluid antenna systems," \emph{IEEE Trans. Wireless Commun.}, vol. 20, no. 3, pp. 1950--1962, Mar. 2021.
\bibitem{new2025tut}
W. K. New {\em et al.}, ``A tutorial on fluid antenna system for 6G networks: Encompassing communication theory, optimization methods and hardware designs,'' \emph{IEEE Commun. Surv. Tuts.}, vol. 27, no. 4, pp. 2325--2377, Aug. 2025.
\bibitem{Lu-2025}
W.-J. Lu {\em et al.}, ``Fluid antennas: Reshaping intrinsic properties for flexible radiation characteristics in intelligent wireless networks,'' {\em IEEE Commun. Mag.}, vol. 63, no. 5, pp. 40--45, May 2025.
\bibitem{hong2025contemporary}
H. Hong {\em et al.}, ``A contemporary survey on fluid antenna systems: Fundamentals and networking perspectives,'' {\em  IEEE Trans. Netw. Sci. Eng.},  vol. 13, pp. 2305--2328, 2026. 
\bibitem{new2025flar}
W. K. New \emph{et al.}, ``Fluid antenna systems: Redefining reconfigurable wireless communications," \emph{IEEE J. Sel. Areas Commun.}, vol. 44, pp. 1013--1044, 2026.
\bibitem{wu2024flu}
T. Wu {\em et al.}, ``Fluid antenna systems enabling 6G: Principles, applications, and research directions,'' to appear in \emph{IEEE Wireless Commun.}, \url{DOI: 10.1109/MWC.2025.3629597}, 2025.
\bibitem{Zhu-Wong-2024}
L. Zhu and K. K. Wong, ``Historical review of fluid antennas and movable antennas,'' {\em arXiv preprint}, \url{arXiv:2401.02362v2}, 2024.
\bibitem{shen2024design}
Y. Shen {\em et al.}, ``Design and experimental validation of mmWave surface wave enabled fluid antennas for future wireless communications,'' to appear in {\em IEEE Antennas \& Wireless Propag. Lett.}, \url{DOI: 10.1109/LAWP.2026.3657059}, 2026.
\bibitem{Shamim-2025}
R. Wang {\em et al.}, ``Electromagnetically reconfigurable fluid antenna system for wireless communications: Design, modeling, algorithm, fabrication, and experiment,'' {\em IEEE J. Select. Areas Commun.}, vol. 44, pp. 1464--1479, 2026.
\bibitem{zhang2024pixel}
J. Zhang {\em et al.}, ``A novel pixel-based reconfigurable antenna applied in fluid antenna systems with high switching speed,'' {\em IEEE Open J. Antennas \& Propag.}, vol. 6, no. 1, pp. 212--228, Feb. 2025.
\bibitem{tong-2025pixel}
B. Liu, T. Wu, K. K. Wong, H. Wong, and K. F. Tong, ``Wideband pixel-based fluid antenna system: An antenna design for smart city,'' {\em IEEE Internet Things J.}, vol. 13, no. 4, pp. 6850--6862, Feb. 2026.
\bibitem{Wong-wc2026}
K. K. Wong, C. Wang, S. Shen, C.-B. Chae and R. Murch, ``Reconfigurable pixel antennas meet fluid antenna systems: A paradigm shift to electromagnetic signal and information processing,'' {\em IEEE Wireless Commun.}, vol. 33, no. 1, pp. 191--198, Feb. 2026.
\bibitem{Zhang-jsac2026}
S. Zhang {\em et al.}, ``Fluid antenna systems enabled by reconfigurable holographic surfaces: Beamforming design and experimental validation,'' {\em IEEE J. Select. Areas Commun.}, vol. 44, pp. 1417--1431, 2026.
\bibitem{Liu-2025arxiv}
B. Liu, K.-F. Tong, K. K. Wong, C.-B. Chae, and H. Wong, ``Programmable meta-fluid antenna for spatial multiplexing in fast fluctuating radio channels,'' {\em Optics Express}, vol. 33, no. 13, pp. 28898--28915, 2025.
\bibitem{tong2025design}
K. F. Tong, B. Liu, and K. K. Wong. ``Designs and challenges in fluid antenna system hardware.'' {\em Electronics}, vol~14, no~7, pp.~1458, 2025.

\bibitem{FAS22}
K. K. Wong, K.-F. Tong, Y. Chen, and Y. Zhang, ``Closed-form expressions for spatial correlation parameters for performance analysis of fluid antenna systems,'' \emph{IET Electron. Lett.}, vol.~58, no.~11, pp.~454--457, May~2022.
\bibitem{khammassi2023new}
M. Khammassi, A. Kammoun, and M.-S. Alouini, ``A new analytical approximation of the fluid antenna system channel,'' {\em IEEE Trans. Wireless Commun.}, vol. 22, no. 12, pp. 8843--8858, Dec. 2023.
\bibitem{new2024fluid}
W. K. New, K. K. Wong, H. Xu, K.-F. Tong and C.-B. Chae, ``Fluid antenna system: New insights on outage probability and diversity gain," \emph{IEEE Trans. Wireless Commun.}, vol. 23, no. 1, pp. 128--140, Jan. 2024.
\bibitem{new2024an}
W. K. New, K. K. Wong, H. Xu, K.-F. Tong and C.-B. Chae, ``An information-theoretic characterization of MIMO-FAS: Optimization, diversity-multiplexing tradeoff and $q$-outage capacity,'' {\em IEEE Trans. Wireless Commun.}, vol. 23, no. 6, pp. 5541--5556, Jun. 2024.
\bibitem{Ghadi-2023}
F. R. Ghadi {\em et al.}, ``Copula-based performance analysis for fluid antenna systems under arbitrary fading channels,'' \emph{IEEE Commun. Lett.}, vol.~27, no.~11, pp.~3068--3072, Nov.~2023.
\bibitem{10678877}
F. Rostami Ghadi {\em et al.}, ``A Gaussian copula approach to the performance analysis of fluid antenna systems,'' \emph{IEEE Trans. Wireless Commun.}, vol. 23, no. 11, pp. 17573--17585, Nov. 2024.
\bibitem{ramirez2024new}
P. Ram\'{i}rez-Espinosa, D. Morales-Jimenez and K. K. Wong, ``A new spatial block-correlation model for fluid antenna systems,'' {\em IEEE Trans. Wireless Commun.}, vol.~23, no.~11, pp. 15829--15843, Nov. 2024.
\bibitem{LaiX}
X. Lai {\em et al.}, ``Revisiting spatial block-correlation model for fluid antenna systems: From constant to variable correlations,'' \emph{IEEE J. Sel. Areas Commun.}, vol.~44, pp.~1335--1351, 2026.

\bibitem{G3_chai2022SISO-FAS-PS}
Z. Chai, K. K. Wong, K. F. Tong, Y. Chen and Y. Zhang, ``Port selection for fluid antenna systems,'' {\em IEEE Commun. Lett.}, vol. 26, no. 5, pp. 1180--1184, May 2022.
\bibitem{XLai23}
X. Lai {\em et al.}, ``On performance of fluid antenna system using maximum ratio combining,'' \emph{IEEE Commun. Lett.}, vol.~28, no.~2, pp.~402--406, Feb.~2024.
\bibitem{10416896}
H.~Qin {\em et al.}, ``Antenna positioning and beamforming design for fluid antenna-assisted multi-user downlink communications,'' \emph{IEEE Wireless Commun. Lett.}, vol.~13, no.~4, pp. 1073--1077, Apr. 2024.
\bibitem{LZhu23}
L. Zhu, W. Ma, B. Ning and R. Zhang, ``Movable-antenna enhanced multiuser communication via antenna position optimization,'' \emph{IEEE Trans. Wireless Commun.}, vol.~23, no.~7, pp.~7214--7229, Jul.~2024.
\bibitem{10767351}
Y.~Chen {\em et al.}, ``Joint beamforming and antenna design for near-field fluid antenna system,'' \emph{IEEE Wireless Commun. Lett.}, vol.~14, no.~2, pp. 415--419, Feb. 2025.

\bibitem{YaoJ241}
J. Yao {\em et al.}, ``FAS-driven spectrum sensing for cognitive radio networks,'' \emph{IEEE Internet Things J.}, vol.~12, no.~5, pp.~6046--6049, Mar.~2025.
\bibitem{wang2024fluid}
C. Wang {\em et al.}, ``Fluid antenna system liberating multiuser MIMO for ISAC via deep renforcement learning,'' {\em IEEE Trans. Wireless Commun.}, vol. 23, no. 9, pp. 10879--10894, Sept. 2024.
\bibitem{zou2024shift}
J. Zou {\em et al.}, ``Shifting the ISAC trade-off with fluid antenna systems,'' \emph{IEEE Wireless Commun. Lett.}, vol.~13, no.~12, pp.~3479--3483, Dec. 2024.
\bibitem{Zhou-isac2024}
L. Zhou, J. Yao, M. Jin, T. Wu and K. K. Wong, ``Fluid antenna-assisted ISAC systems,'' {\em IEEE Wireless Commun. Lett.}, vol. 13, no. 12, pp. 3533--3537, Dec. 2024.
\bibitem{SYangWCL25}
S. Yang {\em et al.}, ``Toward intelligent antenna positioning: Leveraging DRL for FAS-aided ISAC systems,'' \emph{IEEE Internet Things J.}, vol.~14, no.~7, pp.~2029--2033, Jul.~2025.

\bibitem{new-2024noma}
W. K. New {\em et al.}, ``Fluid antenna system enhancing orthogonal and non-orthogonal multiple access,'' {\em IEEE Commun. Lett.}, vol. 28, no. 1, pp. 218--222, Jan. 2024.
\bibitem{YaoJ252}
J. Yao {\em et al.}, ``Exploring fairness for FAS-assisted communication systems: From NOMA to OMA,'' \emph{IEEE Trans. Wireless Commun.}, vol.~24, no.~4, pp.~3433--3449, Apr.~2025.
\bibitem{TWuTCOMM26}
T. Wu {\em et al.}, ``Unleashing more potential from FAS: A framework of FAS-CoNOMA systems,'' \emph{IEEE Trans. Commun.}, vol. 74, pp. 4820--4836, 2026.

\bibitem{Ghadi2024005}
F. Rostami Ghadi {\em et al.}, ``On performance of RIS-aided fluid antenna systems,'' {\em IEEE Wireless Commun. Lett.}, vol. 13, no. 8, pp. 2175--2179, Aug. 2024.
\bibitem{LaiX242}
X. Lai {\em et al.}, ``FAS-RIS: A block-correlation model analysis,'' \emph{IEEE Trans. Veh. Technol.}, vol.~74, no.~2, pp.~3412--3417, Feb.~2025.
\bibitem{YaoJ251}
J. Yao {\em et al.}, ``FAS-RIS communication: Model, analysis, and optimization,'' \emph{IEEE Trans. Veh. Technol.}, vol.~74, no.~6, pp.~9938--9943, Jun.~2025.
\bibitem{YaoJTWC26}
J. Yao {\em et al.}, ``FAS vs. ARIS: Which is more important for FAS-ARIS communication systems?,'' \emph{IEEE Trans. Wireless Commun.}, vol.~25, pp.~2075--2091, 2026.
\bibitem{TWuTVT25}
T. Wu {\em et al.}, ``FAS-RIS for V2X: Unlocking realistic performance analysis with finite elements,'' \emph{IEEE Trans. Veh. Technol.}, \url{DOI: 10.1109/TVT.2025.3647789}, 2025.
\bibitem{HChenTNSE25}
H. Chen {\em et al.}, ``FAS-ARIS: Turning multipath challenges into localization opportunities,'' \emph{IEEE Trans. Netw. Sci. Eng.}, vol. 13, pp. 3756--3772, 2026.

\bibitem{HXuTWC25}
H. Xu {\em et al.}, ``The future is fluid: Revolutionizing DOA estimation with sparse fluid antennas,'' \emph{IEEE Trans. Wireless Commun.}, vol. 25, pp. 11546--11561, 2026.
\bibitem{TWuJSTSP25}
T. Wu {\em et al.}, ``Scalable FAS: A new paradigm for array signal processing,'' to appear in \emph{IEEE J. Sel. Topics Signal Process.}, \url{arXiv:2508.10831}, 2026.

\bibitem{Security3}
B. Tang {\em et al.}, ``Fluid antenna enabling secret communications,'' \emph{IEEE Commun. Lett.}, vol.~27, no.~6, pp.~1491--1495, Jun.~2023.
\bibitem{Security2}
J. D. Vega-S\'{a}nchez, L. F. Urquiza-Aguiar, H. R. C. Mora, N. V. O. Garz\'{o}n, and D. P. M. Osorio, ``Fluid antenna system: Secrecy outage probability analysis,'' \emph{IEEE Trans. Veh. Technol.}, vol.~73, no.~8, pp.~11458--11469, Aug.~2024.
\bibitem{ghadi2024phys}
F. Rostami Ghadi {\em et al.}, ``Physical layer security over fluid antenna systems: Secrecy performance analysis,'' {\em IEEE Trans. Wireless Commun.}, vol. 23, no. 12, pp. 18201--18213, Dec. 2024.
\bibitem{Yao-2025pls}
J. Yao {\em et al.}, ``FAS for secure and covert communications,'' {\em IEEE Internet Things J.}, vol. 12, no. 11, pp. 18414--18418, Jun. 2025.
\bibitem{Security5}
J. Zheng {\em et al.}, ``Unlocking FAS-RIS security analysis with block-correlation model,'' \emph{IEEE Wireless Commun. Lett.}, vol. 14, no. 7, pp. 2029--2033, Jul. 2025.
\bibitem{TuoW}
T. Wu {\em et al.}, ``Variable block-correlation modeling and optimization for secrecy analysis in fluid antenna systems,'' \emph{arXiv preprint}, \url{ arXiv:2510.03594}, 2025.

\bibitem{JYao2024}
J. Yao {\em et al.}, ``Proactive monitoring via jamming in fluid antenna systems,'' \emph{IEEE Commun. Lett.}, vol.~28, no.~7, pp.~1695--1702, Jul.~2024.

\bibitem{H4_wong2022FAMA}
K. K. Wong and K. F. Tong, ``Fluid antenna multiple access,'' \emph{IEEE Trans. Wireless Commun.}, vol.~21, no.~7, pp. 4801--4815, Jul. 2022.
\bibitem{H5_wong2023fast}
K. K. Wong, K. F. Tong, Y. Chen, and Y. Zhang, ``Fast fluid antenna multiple access enabling massive connectivity,'' {\em IEEE Commun. Lett.}, vol. 27, no. 2, pp. 711--715, Feb. 2023.
\bibitem{H6_wong2023sFAMA}
K. K. Wong, D. Morales-Jimenez, K. F. Tong, and C.-B. Chae, ``Slow fluid antenna multiple access,'' \emph{IEEE Trans. Commun.}, vol.~71, no.~5, pp. 2831--2846, May 2023.
\bibitem{Coma-2026fama}
J. P. Gonz\'{a}lez-Coma and F. J. L\'{o}pez-Mart\'{i}nez, ``Slow fluid antenna multiple access with multiport receivers,'' {\em IEEE Wireless Commun. Lett.}, vol. 15, pp. 1280--1284, 2026.
\bibitem{Yuan-2026fama}
X. Yuan, N. Guo, Y. Hu, R. Schober and A. Schmeink, ``Optimal antenna configuration filtering and joint power control in fluid antenna multiple access networks,'' {\em IEEE J. Select. Areas Commun.}, vol. 44, pp. 1227--1242, 2026.
\bibitem{Dinis-2026fama}
D. Dinis and R. Wichman, ``s-FAMA-GP: A low-complexity slow FAMA using interference interpolation,'' {\em IEEE Wireless Commun. Lett.}, vol. 15, pp. 1727--1731, 2026.
\bibitem{Zhang-fama2025}
Z. Zhang, \emph{et al.}, ``On fundamental limits of slow-fluid antenna multiple access for unsourced random access,'' \emph{IEEE Wireless Commun. Lett.}, vol. 14, no. 11, pp. 3455--3459, Nov. 2025.
\bibitem{H12_Wong2024cuma}
K. K. Wong, C. B. Chae, and K. F. Tong, ``Compact ultra massive antenna array: A simple open-loop massive connectivity scheme,'' {\em IEEE Trans. Wireless Commun.}, vol. 23, no. 6, pp. 6279--6294, Jun. 2024.
\bibitem{H10_hong2024coded}
H. Hong, K. K. Wong, K. F. Tong, H. Shin, and Y. Zhang, ``Coded fluid antenna multiple access over fast fading channels,'' \emph{IEEE Wireless Commun. Lett.}, vol.~14, no.~4, pp.~1249--1253, Apr. 2025.
\bibitem{H11_hong2025Downlink}
H. Hong {\em et al.}, ``Downlink OFDM-FAMA in 5G-NR systems,'' {\em IEEE Trans. Wireless Commun.}, vol. 24, no. 12, pp. 10116--10132, Dec. 2025.
\bibitem{Waqar-2025}
N. Waqar, K. K. Wong, C.-B. Chae, and R. Murch, ``Turbocharging fluid antenna multiple access,'' {\em IEEE Trans. Wireless Commun.}, vol. 25, pp. 4038--4052, 2025. 
\bibitem{Waqar-2026wcl}
N. Waqar, K. K. Wong, C.-B. Chae, and R. Murch, ``Attentional copula-aided turbo fluid antenna massive access,'' to appear in {\em IEEE Wireless Commun. Lett.}, \url{DOI: 10.1109/LWC.2026.3665494}, 2026.

\bibitem{xu2024channel}
H. Xu {\em et al.}, ``Channel estimation for FAS-assisted multiuser mmWave systems,'' {\em IEEE Commun. Lett.}, vol. 28, no. 3, pp. 632--636, Mar. 2024.
\bibitem{zhang2024learning}
H. Zhang {\em et al.}, ``Learning-induced channel extrapolation for fluid antenna systems using asymmetric graph masked autoencoder,'' IEEE Wireless Commun. Lett., vol. 13, no. 6, pp. 1665--1669, Jun. 2024.
\bibitem{new2025channel}
W. K. New {\em et al.}, ``Channel estimation and reconstruction in fluid antenna system: Oversampling is essential,'' {\em IEEE Trans. Wireless Commun.}, vol. 24, no. 1, pp. 309--322, Jan. 2025.
\bibitem{10807122}
Z. Zhang, J. Zhu, L. Dai, and R. W. Heath Jr, ``Successive Bayesian reconstructor for channel estimation in fluid antenna systems,'' {\em  IEEE Trans. Wireless Commun.},vol. 24, no. 3, pp. 1992--2006, Mar. 2025.
\bibitem{xu2024sparse}
B.~Xu, Y.~Chen, Q.~Cui, X.~Tao, and K. K. Wong, ``Sparse Bayesian learning-based channel estimation for fluid antenna systems,'' \emph{IEEE Wireless Commun. Lett.}, vol. 14, no. 2, pp. 325--329, Feb. 2025.

\bibitem{Jakes74}
W. C. Jakes, \emph{Microwave Mobile Communications}. New York, NY, USA: Wiley, 1974.
\end{thebibliography}
\end{document}